\renewcommand\section{\@startsection {section}{1}{\z@}%
                                   {-3.5ex \@plus -1ex \@minus -.2ex}
                                   {2.3ex \@plus.2ex}%
                                   {\normalfont\large\bfseries}}
\renewcommand\subsection{\@startsection{subsection}{2}{\z@}%
                                     {-3.25ex\@plus -1ex \@minus -.2ex}%
                                     {1.5ex \@plus .2ex}%
                                     {\normalfont\bfseries}}
\newcommand{\bea}{\begin{eqnarray}}
\newcommand{\eea}{\end{eqnarray}}
\newcommand{\be}{\begin{equation}}
\newcommand{\ee}{\end{equation}}
\newcommand{\sgn}{\mathrm{sgn}}
\newcommand{\noi}{\noindent}
\newcommand{\non}{\nonumber}
\newcommand{\Tr}{\mathrm{Tr}}
\newcommand{\im}{\mathrm{Im}}
\newcommand{\re}{\mathrm{Re}}
\newcommand{\half}{\textstyle{\frac{1}{2}}}
\newcommand{\CC}{\mathcal{C}}
\newcommand{\CJ}{\mathcal{J}}
\newcommand{\CM}{\mathcal{M}}
\newcommand{\CN}{\mathcal{N}}
\newcommand{\CP}{\mathcal{P}}
\theoremstyle{definition}
\newtheorem{proposition}{Proposition}
\begin{document}
\begin{titlepage}

\begin{center}

\hfill IPhT-T10/027

\vskip 2 cm {\Large \bf Wall-crossing of D4-branes using flow trees} \vskip 1.25 cm
{Jan Manschot}\\
{\vskip 0.5cm  Institut de Physique Th\'eorique,\\
 CEA Saclay, CNRS-URA 2306, \\
91191 Gif sur Yvette, France}

\pagestyle{plain}
\end{center}

\vskip 2 cm

\begin{abstract}
\baselineskip=18pt
\noi The moduli dependence of D4-branes on a Calabi-Yau manifold is
studied using attractor flow trees, in the large volume limit of the
K\"ahler cone. One of the moduli dependent existence criteria of flow
trees is the positivity of the flow parameters along its edges. It is shown 
that the sign of the flow parameters can be determined iteratively as
function of the initial moduli, without explicit calculation of the
flow of the moduli in the tree. Using this result, an indefinite quadratic form, which appears in
the expression for the D4-D2-D0 BPS mass in the large volume limit,
is proven to be positive definite for flow trees with 3 or less
endpoints. The contribution of these flow trees to the BPS partition
function is therefore convergent.  From non-primitive wall-crossing is deduced that the $S$-duality invariant partition
function must be a generating function of  the rational invariants $\bar
\Omega(\Gamma)=\sum_{m|\Gamma} \frac{\Omega(\Gamma/m)}{m^2}$ instead
of the integer invariants $\Omega(\Gamma)$.
\end{abstract}

\end{titlepage}

\baselineskip=19pt

\tableofcontents

\vspace{0.5cm}

\section{Introduction}
The BPS-spectrum of supersymmetric quantum field theories
\cite{Seiberg:1994rs, Gaiotto:2008cd, Gaiotto:2009hg} and supergravity
\cite{Denef:2000nb, Denef:2007vg, Sen:2007vb} depends in an intriguing
way on the moduli of the theory. If moduli cross walls  of marginal
stability, BPS-states can combine or decay without violating
physical conservation laws. As a consequence, the supersymmetric index
$\Omega(\Gamma;t)$ of BPS-states with charge $\Gamma$, is only locally constant and changes discontinuously as
function of the moduli $t$. This is by no means an arbritrary process but happens according to a
 rather rigorous mechanism, whose implications are however not fully understood.   

The moduli dependence of the supergravity BPS-spectrum
appears as the possible decay or formation of multi-center solutions
if the moduli are varied \cite{Denef:2000nb}. This has led to the conjecture that the moduli
dependence of the supergravity spectrum is captured by ``attractor flow
trees'' \cite{Denef:2000nb, Denef:2007vg}. These trees are schematic (in some sense linearized) representations of supersymmetric
solutions, which are much easier to analyse than the full supergravity
solutions. Various
results have  been derived using the flow trees, such as the (semi-primitive) wall-crossing formula
\cite{Denef:2007vg}, and the derivation of BPS spectra
\cite{Denef:2007vg, Denef:2001xn, deBoer:2008fk, Collinucci:2008ht,
  VanHerck:2009ww}. 

The BPS-states of supergravity are represented in string theory as D-branes wrapped
around cycles of a Calabi-Yau 3-fold $X$. From this point of view, one is
interested in the BPS-spectrum of the D-branes, as function of
the moduli of $X$. A fruitful interplay exists between stability of
D-branes and stability in mathematics \cite{Douglas:2000gi,
  Joyce:2008, Kontsevich:2008}. The BPS indices $\Omega(\Gamma;t)$ are conjecturally equal to
the rigorously defined Donaldson-Thomas invariants.

A central object in the study of BPS-states is the partition function, which is the
generating function for the supersymmetric index $\Omega(\Gamma;t)$ of
BPS-states with charge $\Gamma$. The mixed ensemble is most natural
for $\CN=2$ supergravity \cite{Ooguri:2004zv}, with the electric charges in the
canonical ensemble and the magnetic charges in the microcanonical
ensemble. Besides being the generating function of $\Omega(\Gamma;t)$,
it is a useful object to test the validity on the microscopic level of
duality groups. These are for $\CN=2$ supergravity in 4 dimensions the $S$-duality group
$SL(2,\mathbb{Z})$ \cite{Bohm:1999uk}, and the electric-magnetic duality group
$Sp(2b_2+2,\mathbb{Z})$ (or a subgroup) \cite{de Wit:1996ix}. Most desirable is
a partition function which gives at any given point $t$ in moduli
space the BPS indices $\Omega(\Gamma;t)$, and which
captures correctly the changes of the indices if the moduli are
varied. 

This is a rather difficult problem in general. However, one might
construct the partition function using attractor
flow trees from elementary building blocks, the black hole centers which cannot decay.  
Ref. \cite{Manschot:2009ia} studied in this way the contribution to
the partition function of a flow tree with 2 endpoints with D4-D2-D0 charge. The analysis was simplified by restricting to the
large volume limit of a single complexified K\"ahler cone.  
It shows that a certain indefinite quadratic form, which appears in the
expression for the BPS mass in this limit, is positive definite when
evaluated for stable bound states of two constituents, or equivalently
flow trees with 2 endpoints. This implies the
convergence of the contribution to the partition function of these flow trees, which
enumerates only the stable BPS-states at a point $t$ in the moduli space. The generating function does
not preserve $S$-duality, but can be made so by the addition of a
``modular completion'', which (unexpectedly)  also has the effect
of changing it to a continuous function of the moduli. Continuity appeared in the
  literature before in the context of wall-crossing \cite{Joyce:2006pf,Gaiotto:2008cd}.

The current paper extends the approach of 
Ref. \cite{Manschot:2009ia} to flow trees with 3 endpoints. This
solves various conceptual issues for a generalization to any
number of endpoints. The larger flow trees complicate the analysis
considerably, since the existence (or stability) conditions depend on the flow of the moduli throughout
the tree, and are therefore only indirectly determined by the value $t$
of the moduli at ``infinity''.  The most sensitive condition to 
variations of the moduli is the sign of the flow parameters along the
edges of the tree. The flow parameter is a measure for the length of the edge, and therefore required to be
positive for all edges of an existing flow tree. Fortunately, Subsection
\ref{subsec:sugraflowtrees} derives an iterative expression in terms
of $t$ for this sign, without explicit computation of the flow of the moduli along the
edges. Section \ref{sec:d4d2d0} applies this result to BPS D4-branes,
to proof that also for flow trees with 3 endpoints, an indefinite quadratic form is
positivite definite when restricted to stable flow trees, analogously
to the case of 2 endpoints. This again ensures the convergence of the
partition function. It is expected that this property continues to
hold for flow trees with any number of endpoints.  

To incorporate flow trees with equal charges for 2 of the 3 endpoints, one is required to use the semi-primitive wall-crossing
formula. Section \ref{sec:d4d2d0} argues that partition functions
which capture non-primitive wall-crossing can only be compatible with
$S$-duality, if it is a generating 
function of the rational invariants $\bar \Omega(\Gamma;t)=\sum_{m|\Gamma}
\frac{\Omega(\Gamma/m)}{m^2}$ \cite{Joyce:2008} and {\it not} of the integer invariants
$\Omega(\Gamma;t)$. The jumps of the indices in terms of $\bar
\Omega(\Gamma;t)$ are also more easily identified as contributions
from flow trees than in terms of $\Omega(\Gamma;t)$. The contributions
of the primitive and semi-primitive trees are shown to combine nicely into sums over certain lattices. 

Unfortunately, the form of the stability condition for trees with 3
endpoints prevents an easy construction of the modular completion of
its contribution to the partition function analogous to
Ref. \cite{Manschot:2009ia}. The compatibility of these flow trees with $S$-duality is thus not yet
completely shown, but important prerequisites are satisfied. I hope to
address this issue in future work.  

I conclude the introduction with the outline of the paper. Section
\ref{sec:wall-crossing} reviews wall-crossing of BPS-states to render the paper self-contained. It  
reviews in particular the Kontsevich-Soibelman wall-crossing formula,
wall-crossing in supergravity and the split attractor flow
conjecture. It derives an expression for the sign of the flow parameters, without
explicitly calculating the flow of the moduli throughout the tree. 
Section \ref{sec:d4d2d0} applies the general discussion of Section
\ref{sec:wall-crossing} to D4-D2-D0 BPS-states. The main part of the
section deals with the proof that the indefinite quadratic form is
positive definite on the stable spectrum for $N\leq 3$. Subsection 
\ref{subsec:non-primitive} comments on non-primitive wall-crossing,
and why $S$-duality favours the rational invariants $\bar
\Omega(\Gamma;t)$. Section \ref{sec:conclusion} concludes with a short
summary of the results and  discussion.

\section{Wall-crossing and flow trees}
\label{sec:wall-crossing}
\setcounter{equation}{0}
This section reviews briefly stability and wall-crossing of
BPS-states in string theory compactified on a Calabi-Yau 3-fold $X$ (more
information can be found in the references). This compactification preserves $\CN=2$
supersymmetry, such that the only massive BPS states preserve half of the
supersymmetry. We will work in the Type IIA duality
frame, where the electric-magnetic charges of supergravity
correspond to D-branes wrapping even dimensional cycles of $X$. The charges are
combined into a vector $\Gamma=(P^0,P^a,Q_a,Q_0)^\mathrm{T}$, which is
an element of a  $(2b_2+2)$-dimensional symplectic lattice $L$, with
symplectic inner product: 
\be
\label{eq:sympairing}
\left<\Gamma_1,\Gamma_2\right>=-P_1^0\,Q_{0,2}+P_1\cdot Q_2-P_2\cdot Q_1+P^0_2\,Q_{0,1}.
\ee
$\left<\Gamma_1,\Gamma_2\right>$ is often abbreviated to $I_{12}$ in the following.

The $\CN=2$ superalgebra contains a central element, the central charge
$Z:(L,C_X)\to \mathbb{C}$, which associates to every $\Gamma\in L$ and
point of the moduli space $t=B+iJ\in C_X$ (the complexified K\"ahler cone for Type IIA) a complex number
$Z(\Gamma,t)\in \mathbb{C}=\mathbb{R}^2$.  The mass $M$ of a BPS-state is determined
by the central charge: $M=|Z(\Gamma,t)|$. The (not complexified) K\"ahler cone is a
$b_2$-dimensional cone which parametrizes the volumes of even
dimensional cycles of $X$. The boundary of the cone corresponds to
vanishing of the volume of 2-cycles. From the perspective of mirror 
symmetry, it is natural to consider the ``extended K\"ahler
moduli space'' \cite{Aspinwall:1993nu}, which is the union of all
K\"ahler cones of Calabi-Yaus which are birationally
equivalent. These Calabi-Yaus are however not topologically
equivalent, since continuation of the K\"ahler moduli beyond the
boundary of the K\"ahler cone leads to flops of 2-cycles of
$X$.  Although flops do not 
lead to singular physics, we restrict our attention in this paper to $C_X$, corresponding to
topologically equivalent Calabi-Yaus. 

The index $\Omega(\Gamma;t)$ is a measure for the number of
BPS-states. It is defined by a weighted trace over the Hilbert space
$\mathcal{H}(\Gamma,t)$: 
\be
\Omega(\Gamma;t)=\frac{1}{2} \Tr_{\mathcal{H}(\Gamma,t)}\,(2J_3)^2\,(-1)^{2J_3},
\ee
where $J_3$ is a generator of the rotation group Spin(3). The sum over
the Hilbert space shows that $\Omega(\Gamma;t)$ are integers. An important
property of the index is its independence of the string coupling constant
$g_\mathrm{s}$ and the complex structure moduli of $X$ (in Type IIA). Therefore, the
index can be determined and analyzed at finite $g_\mathrm{s}$ or in
the limit $g_\mathrm{s}\to 0$ depending on which regime is better
suited for the analysis. The first regime corresponds to 
4-dimensional supergravity, where many of the BPS-states appear as
(possibly multi-centered) black holes. The limit $g_\mathrm{s}\to 0$
is the D-brane regime, where the BPS-states can often be 
related to mathematical objects. 

As the notation suggests, the Hilbert space $\mathcal{H}(\Gamma,t)$ depends on
$C_X$. The indices $\Omega(\Gamma;t)$ are only locally
constant and may jump across codimension 1 hypersurfaces in the moduli space. These ``walls of
marginal stability'' are determined by the alignment of central charges
of the constituents $Z(\Gamma_1,t)$ and $Z(\Gamma_2,t)$ with
$\Gamma=\Gamma_1+\Gamma_2$ (assuming that $I_{12}\neq 0$, otherwise
the subspaces of the moduli space where the central charges align are
walls of threshold stability), and divide the
moduli space into chambers. Wall-crossing was 
first observed in 4 dimensions in supersymmetric gauge theory
\cite{Seiberg:1994rs}, and later in supergravity \cite{Denef:2000nb,
  Denef:2007vg}.

\subsection{Kontsevich-Soibelman wall-crossing formula}
\label{subsec:KSwc}
Supersymmetric D-brane configurations lend themselves well to more
abstract descriptions like triangulated categories. Within this 
mathematical setting, Kontsevich and Soibelman \cite{Kontsevich:2008}
have proposed a formula which captures changes of the invariants
$\Delta\Omega(\Gamma_1+\Gamma_2;t)$ at a wall of marginal
stability for generic $\Gamma_1$ and $\Gamma_2$. This was an important open problem in physics, where
the jumps of the indices were only known in restricted situations
like semi-primitive charges \cite{Denef:2007vg}  or Seiberg-Witten
theory \cite{Ferrari:1996sv}. By now a lot of evidence exists for the
validity of the KS-formula in generic BPS contexts
\cite{Gaiotto:2008cd, Gaiotto:2009hg, Dimofte:2009bv,Dimofte:2009tm}. We briefly
review the KS-formula here.  

Ref. \cite{Kontsevich:2008} introduces a Lie algebra with generator
$e_\Gamma$ for every charge $\Gamma \in L$. The commutation relations
are given by
\be
[e_{\Gamma_1},e_{\Gamma_2}]=(-1)^{\left<\Gamma_1,\Gamma_2\right>} \left<\Gamma_1,\Gamma_2\right>\,e_{\Gamma_1+\Gamma_2}.
\ee
For every charge $\Gamma$ an element $T_{\Gamma}$ of the Lie group is
defined by
\be
T_\Gamma=\exp \left(-\sum_{n \geq 1} \frac{e_{n\Gamma}}{n^2} \right).
\ee
A sector in $\mathbb{R}^2$ is defined as a region bounded by two rays whose starting point is at the
origin. A sector is strict if the angle between the rays is less then
$180^\circ$. A product $A_V$
of elements $T_\Gamma$ is associated to a strict sector $V\in \mathbb{R}^2$. The clockwise order
of the central charges $Z(\Gamma,t)\in V$ with $\Gamma\in L$,
determines the order of the product:
\be
\label{eq:KSformula}
A_V=\prod^\curvearrowright_{\Gamma\in L,\,Z(\Gamma,t)\in V} T_\Gamma^{\Omega(\Gamma;t)}.
\ee
If the moduli cross a wall of marginal stability, the order of the
central charges changes and therefore likewise the order of the
product. The claim of \cite{Kontsevich:2008} is that the change of the 
$\Omega(\Gamma;t)$ is precisely such that the product
$A_V$ does not change. The commutation relations of $e_\Gamma$ thus
determine the changes of indices if walls are
crossed. 

Note that the form of the wall-crossing formula also suggests that the
invariants $\bar \Omega(\Gamma;t)$, defined by
\be
\label{eq:baromega}
\bar \Omega(\Gamma;t)=\sum_{m|\Gamma}\frac{\Omega(\Gamma/m;t)}{m^2},
\ee
are convenient. These are valued in $\mathbb{Q}$ and are conjecturally
equal to the invariants which are the central topic in the work of
Joyce \cite{Joyce:2008, Joyce:2006pf}. The product formula (\ref{eq:KSformula}) is
in terms of these invariants more simply expressed using the elements
$R_\Gamma^{\bar \Omega(\Gamma;t)}=\exp \left(\bar
  \Omega(\Gamma;t)\,e_\Gamma\right)$. Eq. (\ref{eq:baromega}) can be inverted with the M\"obius inversion formula
\be
\Omega(\Gamma;t)=\sum_{m|\Gamma}\frac{\bar \Omega(\Gamma/m;t)}{m^2}\, \mu\!\left(m\right),
\ee
with $\Gamma$ primitive. The M\"obius function $\mu(n)$ is defined by:
$\mu(1)=1$; if $n>0$ with prime decomposition $n=p_1^{a_1}\dots
p_k^{a_k}$, then $\mu(n)=(-1)^k$, if $a_i=1$ for $i=1,\dots ,k$; and
$\mu(n)=0$ otherwise.

At a generic point of the walls, only the central charges of two
non-parallel primitive charge vectors
$\Gamma_1$ and $\Gamma_2\in L$ align. We denote the chambers on either
site of the wall by $\mathcal{C}_\mathrm{A}$ and
$\mathcal{C}_\mathrm{B}$. To determine the change of the BPS-indices
between $\mathcal{C}_\mathrm{A}$ and
$\mathcal{C}_\mathrm{B}$, one can truncate the product (\ref{eq:KSformula}) to the lattice
generated by $\Gamma_1$ and $\Gamma_2$. The product then becomes
\be
\prod_{\frac{m}{n}\,\mathrm{decreasing}}
T_{(m,n)}^{\Omega((m,n);t_\mathrm{A})}=\prod_{\frac{m}{n}\,\mathrm{increasing}}
T_{(m,n)}^{\Omega((m,n);t_\mathrm{B})},
\ee
where $(m,n)=m\Gamma_1+n\Gamma_2$. Using the Baker-Campbell-Hausdorff formula
\be
e^{tX}e^{tY}=e^{tY}e^{t^2[X,Y]}e^{\frac{1}{2}t^3(\mathrm{ad}\,
  X)^2Y}e^{\frac{1}{2}t^3(\mathrm{ad}\,
  Y)^2X}e^{-\frac{1}{4}t^4[X,[Y,[X,Y]]]}\dots e^{tX},
\ee
with $(\mathrm{ad}\,X)Y=[X,Y]$ and $t\in \mathbb{R}$,
$\Delta\Omega(m\Gamma_1+n\Gamma_2;t)$ can be determined in principle. For
$(m,n)=(1,1)$ one finds the well-known formula
\be
\label{eq:primwallcross}
\Delta\Omega(\Gamma;t)=(-1)^{\left<\Gamma_1,\Gamma_2\right>-1}\left<\Gamma_1,\Gamma_2\right>
\, \Omega(\Gamma_1;t) \Omega(\Gamma_2;t),
\ee
where we assumed that $\left<\Gamma_1,\Gamma_2\right>>0$ and
$\im(Z(\Gamma_1)\bar Z(\Gamma_2))>0$ in $\mathcal{C}_\mathrm{B}$;
$\mathcal{C}_\mathrm{B}$ is thus the stable chamber. A (product) formula is known for semi-primitive wall-crossing
$(m,n)=(1,n)$ from supergravity \cite{Denef:2007vg}, which is consistent with
Eq. (\ref{eq:KSformula}). Eq. (\ref{eq:d0semip}) of Section
\ref{sec:d4d2d0} gives a similar formula, which is adapted for
wall-crossing of D4-D2-D0 BPS-states in the large volume limit. 

The first example of proper non-primitive wall-crossing is for
$(m,n)=(2,2)$.  The KS-formula is now the only tool to 
compute the change  in the index across a wall. To present the result,
it is useful to use nested lists like
$((\Gamma_1,\Gamma_2),((\Gamma_3,\Gamma_4),\Gamma_5))$, which also
play a large role in the discussion on flow trees in Subsection 
\ref{subsec:sugraflowtrees}. We define the following numbers: 
\be
\label{eq:deltatree}
\bar \Omega(\,(\Gamma_1,\Gamma_2)\,;t)=(-1)^{\left<\Gamma_1,\Gamma_2\right>-1}\left<\Gamma_1,\Gamma_2\right>\, \bar
\Omega(\Gamma_1;t)\, \bar \Omega(\Gamma_2;t),
\ee
which carries on to more complicated lists. For example the nested list $((\Gamma_1,\Gamma_2),\Gamma_3)$
leads to: 
\begin{eqnarray}
&&\bar
\Omega(\,((\Gamma_1,\Gamma_2),\Gamma_3)\,;t)=(-1)^{\left<\Gamma_1+\Gamma_2,\Gamma_3\right>+\left<\Gamma_1,\Gamma_2\right>}\left<\Gamma_1+\Gamma_2,\Gamma_3\right>\,\left<\Gamma_1,\Gamma_2\right>\\
&&\qquad\qquad \qquad \qquad \qquad \times\,\bar \Omega(\Gamma_1;t)\, \bar
\Omega(\Gamma_2;t)\, \bar \Omega(\Gamma_3;t).\non
\end{eqnarray}
The jump of the index $\Delta \Omega(2\Gamma_1+2\Gamma_2;t)$ depends
on the indices $\Omega(a\Gamma_1+b\Gamma_2; t_\mathrm{A})$ in
$\mathcal{C}_\mathrm{A}$ with $a,b\in
\left[0,2\right]$. One finds using the KS-formula:
\begin{eqnarray}
\label{eq:22}
&&\hspace{-2cm}\Delta \bar \Omega(\,2\Gamma_1+2\Gamma_2;t_\mathrm{A})= \non\\
&& \bar
\Omega(\,(\Gamma_1,\Gamma_1+2\Gamma_2)\,;t_\mathrm{A})+\bar
\Omega(\,(2\Gamma_1,2\Gamma_2)\,;t_\mathrm{A})+\bar
\Omega(\,((2\Gamma_1+\Gamma_2),\Gamma_2)\,;t_\mathrm{A})\non \\
&&+\half\bar
\Omega(\,(\Gamma_1,(\Gamma_1,2\Gamma_2))\,;t_\mathrm{A})
+\half \bar
\Omega(\,(\Gamma_2,(\Gamma_2,2\Gamma_1))\,;t_\mathrm{A})\\
&&+ \half \bar
\Omega(\,((\Gamma_1,\Gamma_1+\Gamma_2),\Gamma_2)\,;t_\mathrm{A})+\half
\bar \Omega(\,((\Gamma_2,\Gamma_2+\Gamma_1),\Gamma_1)\,;t_\mathrm{A})\non\\
&&+\textstyle{\frac{1}{4}}\bar
\Omega(\,((\Gamma_2,(\Gamma_1,\Gamma_2)),\Gamma_1)\,;t_\mathrm{A}).\non 
\end{eqnarray}
We observe that the jump $\Delta \bar \Omega(2\Gamma_1+2\Gamma_2)$
is packaged conveniently in terms of $\bar \Omega$'s and nested
lists. Flow trees are also classified by nested lists, the terms in
Eq. (\ref{eq:22}) are thus naturally identified with contributions of
the corresponding flow trees. The KS-formula provides the non-trivial
prefactors. Subsection \ref{subsec:non-primitive} comments more on this.  

\subsection{Supergravity and flow trees}
\label{subsec:sugraflowtrees}
At finite string coupling $g_\mathrm{s}$ (such that the 4-dimensional
Newton constant $G_4$ is finite), BPS-states correspond to solutions of the supergravity equations
of motion which preserve half of the supersymmetry. These solutions
often contain various black holes with macroscopic horizons. 
The (K\"ahler) moduli appear in supergravity as massless
scalars. Their values at infinity are imposed as boundary
conditions. They determine the value of the central charge, and
therefore also the stability of bound states. The values of the moduli
are generically not constant throughout a black hole solution, but ``flow'' to special values
determined by the electric-magnetic charge of the black hole, due to
the attractor mechanism \cite{Ferrara:1995ih}. A point of
concern in the attractor mechanism is the possibility of multiple basins of attraction depending on
the values of the moduli at infinity \cite{Moore:1998pn}. Ref. \cite{Denef:2000nb}
explains how this is related to the points in moduli space where the
volume of a 2-cycle of $X$ vanishes. This paper avoids these
singularities by restricting the moduli to a single K\"ahler cone as
explained in the introduction to this section.  

The $\CN=2$ supergravity Lagrangian admits the action of an
$Sp(2b_2+2,\mathbb{Z})$ duality group \cite{de Wit:1996ix}. The relevant subgroup in the
large volume limit are the translations $\mathbb{Z}^{b_2}$ which act by 
\be
\label{eq:periodicity}
\mathbf{K}(k)=\left(\begin{array}{cccc}1 &  & & \\ k^a & \mathbf{1} & & \\ {1\over 2}
  d_{abc}k^bk^c & d_{abc}k^c & \mathbf{1} & \\
{1\over 6} d_{abc}k^ck^bk^c & {1\over 2} d_{abc}k^bk^c & k^a &
1\end{array}\right), \quad k\in \mathbb{Z}^{b_2},
\ee
simultaneously on the charge $\Gamma$ and the period vector $\Pi=(1,t^a,\half
d_{abc}t^bt^c,\frac{1}{6}d_{abc}t^at^bt^c)^\mathrm{T}$. There is in
addition an $SL(2,\mathbb{Z})$ duality group \cite{Bohm:1999uk} which can be related to the IIB
$S$-duality group by a timelike T-duality or the c-map. $S$-duality acts by fractional linear transformations on
$\tau=C_0+\frac{i\beta}{g_\mathrm{s}}$, and interchanges the $B$- and 
$C$-fields. 

A brief review is now given about multi-center supergravity solutions,
before discussing attractor flow trees. The general form of the metric
of a BPS multi-center solution is \cite{Denef:2000nb} 
\be
ds^2=-e^{2U}(dt+\omega )^2+e^{-2U}d\vec{x}^2.
\ee
Since we consider asymptotically flat space-times $\lim_{r\to\infty}
U,\,\,\omega=1$. The evolution of the Calabi-Yau periods in a single
center solution is such that  
\be
\label{eq:flowimZZ}
2\im\left(e^{-U-i\alpha}Z(\Gamma',t)\right)=\sqrt{G_4}\frac{\left<\Gamma,\Gamma'\right>}{r} +2\im\left(e^{-i\alpha}Z(\Gamma',t)\right)_{r=\infty},
\ee
for every charge $\Gamma'\in L$; $\alpha$ is the phase of
$Z(\Gamma,t)$ \cite{Denef:2000nb}. In principle one can solve for the
evolution of the periods and moduli from this equation. The evolution
is often described in terms of the flow parameter $\rho=\sqrt{G_4}/2r$.

More interesting for discussions about stability are solutions
with more centers. Ref. \cite{Denef:2000nb}  shows that the distance
between two centers in a 2-center solution is given by: 
\be
\label{eq:centerdis}
|x_1-x_2|=\sqrt{G_4} \frac{\left<\Gamma_1,\Gamma_2\right>}{2}\frac{|Z(\Gamma_1+\Gamma_2,t)|}{\im
  (Z(\Gamma_1,t)\bar Z(\Gamma_2,t))},
\ee
where the moduli $t$ are evaluated at $r=\infty$. 
The right hand side can be positive or negative depending on the
values of the moduli at infinity. A negative value indicates 
that the BPS-states do not exist at this point of the moduli
space, or in other words that they are unstable. On the other hand,
positivity does not imply stability, since it is not a
sufficient condition for the existence of a full solution to the
supergravity equations of motion. For example, solutions where the central charge
vanishes at a regular point of the moduli space should be disgarded. 
If we assume that this does not happen, and the
existence of the solution depends only on the sign of the right-hand side
of Eq. (\ref{eq:centerdis}), the contribution to the index of the 2-center
solution as function of the moduli can be written as
\cite{Denef:2007vg, Denef:2002ru, Manschot:2009ia}:
\begin{eqnarray}
\label{eq:cont12}
&&\half \left(\sgn(\im(Z(\Gamma_1,t)\bar Z(\Gamma_2,t)))+\sgn(\left< \Gamma_1,\Gamma_2\right>)\right)\\
&&\times (-1)^{\left< \Gamma_1,\Gamma_2\right>-1}\left< \Gamma_1,\Gamma_2\right>\Omega(\Gamma_1)\Omega(\Gamma_2), \non
\end{eqnarray}
with $\sgn(x)$ defined by
\be
\sgn(x)=\left\{\begin{array}{rr} 1, & x> 0,\\ 0, &  x=0, \\ -1, & x< 0. \end{array} \right. 
\ee
Since close to the wall of marginal stability, the supergravity
solution will always resemble a 2-center solution this is consistent
with  Eq. (\ref{eq:primwallcross}). Note that Eq. (\ref{eq:cont12})
gives a non-zero contribution at the wall. 

Using that $e^{-U}\to \sqrt{G_4}|Z(\Gamma,t)|/r$ for
$r\to 0$, one finds from Eq. (\ref{eq:flowimZZ}) that the attractor equations are equivalent to 
\be
\im (Z(\Gamma,t(\Gamma))\bar Z(\Gamma',t(\Gamma)))=-\left<\Gamma',\Gamma\right>
\ee
for every $\Gamma'\in L$. One observes from this equation that if the
moduli at infinity are fixed at the attractor point $t(\Gamma)$, the right-hand side of
Eq. (\ref{eq:centerdis}) can never be positive, and therefore 2-center solutions can not exist.   

To understand all the implications of the supergravity viewpoint to
BPS-stability, one needs to study solutions with more centers, which
becomes quite complicated. Fortunately, the split attractor flow conjecture
\cite{Denef:2000nb, Denef:2007vg} proposes a rather elegant framework
for analyzing the stability of multi-center solutions as function
of the background moduli. The conjecture has on the other hand not
much bearing on those multi-center solutions, whose stability does
not depend on the moduli. The mysterious scaling solutions lie in this
class \cite{Denef:2007vg}. The conjecture does not distinguish such
solutions from single center solutions. We briefly review the
conjecture at this point, following Refs. \cite{Denef:2007vg, Denef:2000ar}.

The central objects of the conjecture are the so-called ``(attractor)
flow trees'', which are simplified, schematic representations of 
supergravity solutions. An example of a flow tree is presented in
Fig. \ref{fig:flowtree}. Its graph is a rooted tree (meaning a
directed tree with all edges directed away from the root vertex, see
e.g. \cite{Diestel:1997}), and corresponds
to a nested list of the total charge $\Gamma$. The nested
list corresponding to Fig. \ref{fig:flowtree} is
$((\Gamma_1,\Gamma_2),((\Gamma_3,\Gamma_4),\Gamma_5))$.\footnote{For
  notational convenience, the $\Gamma$'s, comma's and outer parentheses
  are in the following omitted from the nested lists, thus
  $((\Gamma_1,\Gamma_2),((\Gamma_3,\Gamma_4),\Gamma_5))\to (12)((34)5)$.} 
The vertices are all connected and have generically either one (the leaves) or three edges connect to it. 
The root vertex $v_0$ (drawn at the top in Fig. \ref{fig:flowtree}) corresponds to the sphere at
infinite radius in the supergravity solution, which surrounds the total charge
$\Gamma$. The $N$ bottom vertices (endpoints)
represent black hole centers with charges $\Gamma_i$, $i=1,\dots ,N$
with $\Gamma=\sum_{i=1}^N \Gamma_i$. A tree with
$N$ bottom vertices has $2N-1$ edges and $N-1$ trivalent vertices. We
denote the set of trivalent vertices by $V$, and the set of edges by
$E$. The vertices, edges and charges can obviously be labeled by binary
words, e.g. $RLL$.

It is useful to introduce some notation associated with a trivalent
vertex $v$, for later recursive applications. A vertex which appears
one vertex before $v$ in the tree is denoted by $vU$. The edge between
$vU$ and $v$ is denoted by $e_v$, and the charge along $e_v$ by
$\Gamma_v$. The charge splits at a trivalent vertex $v$:
$\Gamma_v=\Gamma_{vL}+\Gamma_{vR}$;
$\Gamma_{vL}$ goes off to the left and $\Gamma_{vR}$
to the right. 
\begin{figure}[h!]
\centering
\includegraphics[totalheight=10cm]{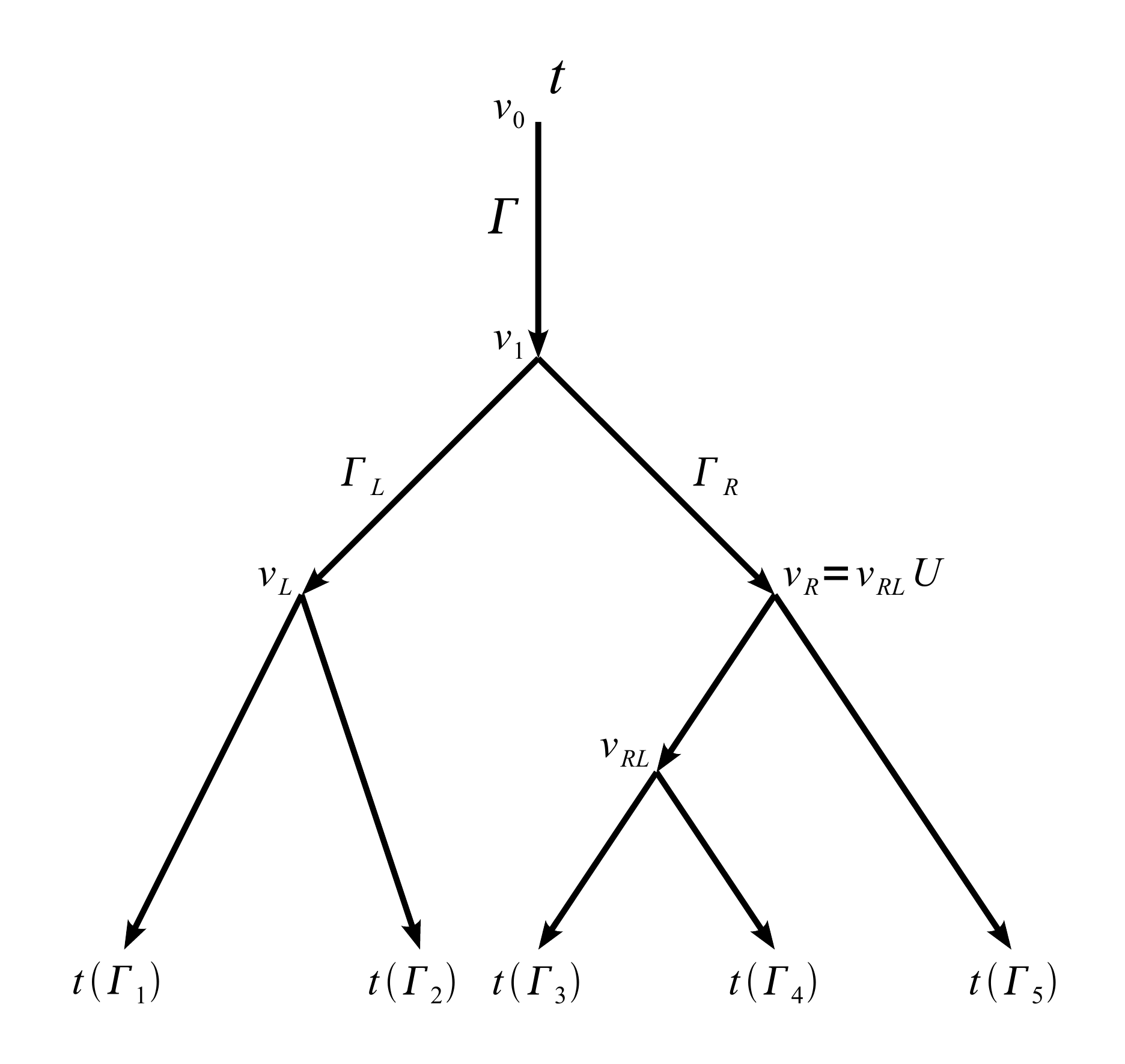}
\caption{The attractor flow tree corresponding to $((\Gamma_1,\Gamma_2),((\Gamma_3,\Gamma_4),\Gamma_5))$.}
\label{fig:flowtree}
\end{figure}

Based on a nested list of charges, one can always construct the rooted
tree. A flow tree is essentially an embedding of the rooted tree $T$ in
moduli space, which might or might not exist depending on the value $t$ of
the moduli at $v_0$. The flow of the moduli along an edge $e\in E$ is given by the evolution of the periods for a single center black
hole (\ref{eq:flowimZZ}) with the corresponding charge  
$\Gamma_{e}$. An edge splits at a trivalent vertex $v$ with modului $t_v$ into edges with charges $\Gamma_{vL}$ and $\Gamma_{vR}$, only if 
$t_v$ is at a wall of marginal stability for $(\Gamma_{vL},\Gamma_{vR})$. If the moduli lie on the intersection of various walls of marginal stability,
the valence of the vertices can increase accordingly. From Eq. (\ref{eq:flowimZZ}), one deduces that the
change of the flow parameter $\Delta\rho_v=\rho_v-\rho_{vU}$ along $e_v$ is:
\be
\label{eq:rhov}
\Delta\rho_v= \frac{\im(Z(\Gamma_{vL},t_{vU})\,\bar Z(\Gamma_{vR},t_{vU}))}{\left<\Gamma_{vL},\Gamma_{vR}
  \right>|Z(\Gamma_{vL}+\Gamma_{vR},t)|}.
\ee
The flows terminate at the bottom vertices, where they are at the
corresponding attractor points $t(\Gamma_i)$.

A flow tree can now be defined more precisely. Given a choice $t$ of
moduli at $v_0$, a flow tree is a rooted tree $T$, which satisfies
the following (stability) conditions \cite{Denef:2000nb}:\\ 
\vspace{-0.5cm}
\label{page.areq}
\begin{enumerate}[A:]
\item 
$\qquad 
\forall\, v\in V:\qquad \qquad \qquad \left<\Gamma_{vL},\Gamma_{vR}
  \right>\,\im(Z(\Gamma_{vL},t_{vU})\,\bar Z(\Gamma_{vR},t_{vU}))> 0. \non\
$
\item 
$\qquad \forall\, v\in V:\qquad \qquad\qquad Z(\Gamma_{vL},t_{v})\,\bar Z(\Gamma_{vR},t_{v})>0. \non$
\item
$\qquad $for $i=1,\dots,N$: $\qquad$  the attractor points
$t(\Gamma_i)$ do exist in the moduli space.
\end{enumerate}
Conditions A and B together imply that $v$ lies at a wall of
marginal stability. Condition A is also equivalent with the positivity
of the flow parameter $\Delta \rho_v$ (\ref{eq:rhov}) along $e_v$. Since it is a
measure for the (inverse) length of the edge, the condition is an
obvious necessary condition for the existence of a supergravity
solution. After all this
introductory material the attractor flow conjecture can be stated:  

\vspace{.3cm}
\noi{\bf Split attractor flow conjecture \cite{Denef:2007vg}:}
\vspace{-.3cm}
\begin{enumerate}
\item components of the
moduli space of (4-dimensional) supergravity solutions with total
charge $\Gamma$ and values of the moduli at infinity $t$, are in 1 to 1
correspondence with flow trees starting with total charge
$\Gamma$ and moduli $t$, 
\item for fixed total charge $\Gamma$ and moduli $t$ only a finite
  number of flow trees exist. By 1. the Hilbert space of
  BPS-states factorizes into a direct sum of the corresponding flow trees. 
\end{enumerate}

This conjecture shows the potential of flow trees
to describe the stability of BPS-states. It suggests an
important role for the endpoints of the flow trees, since these BPS-objects are stable
everywhere in the moduli space. As mentioned before, the endpoints do
not necessarily correspond to a single center, due to the existence of scaling solutions
\cite{Denef:2007vg}. However, the states corresponding to these
endpoints cannot decay at any point in the moduli space. Following 
\cite{Cheng:2007ch}, we will call them ``immortal'' BPS-states. Since
the index of an immortal object with charge $\Gamma$ does not depend on $t$, we simply
denote it by $\Omega(\Gamma)$. The immortal BPS-objects can
thus be found by tuning the moduli to the corresponding attractor
point. In agreement with this, only the  $N=1$ tree exists if
$t=t(\Gamma)$. A convenient aspect of the immortal BPS-objects is that more is
known about their microscopic aspects, their degrees of freedom are
typically those of a conformal field theory, which adds many
symmetries to the problem. 

Whether Condition A is satisfied for $(T,t)$ is conveniently determined by
a product formula:
\begin{eqnarray}
\label{eq:conditionS}
&&\mathrm{\bf Condition\,\, A}:\\
&&\qquad\qquad S(T,t)=\prod_{v\in V} \half
\left(\sgn(\im(Z(\Gamma_{vL},t_{vU})\bar Z(\Gamma_{vR},t_{vU})))+\sgn(\left<\Gamma_{vL},\Gamma_{vR}\right>)\right)\neq
0.\non
\end{eqnarray}
The $\half$ appears in the definition of $S(T,t)$ such that $S(T,t)$ is
 $\pm 1$ instead of $\pm 2^{N-1}$ for flow trees. Similarly, Condition C can be
reformulated as $\prod_{i=1}^N\Omega(\Gamma_i) \neq 0$. Thus, if one
knows that Condition B is satisfied, the contribution of a flow tree
to the index can be found essentially by iteration of
Eq. (\ref{eq:cont12}). The product $S(T,t)$ determines
whether the tree corresponds to (stable) BPS-states, and the
contribution of the flow tree to the index is given by the
KS-formula. Some subtleties arise if multiple endpoints
have equal charges; the next section will comment on this.

Much of the power of the split attractor flow conjecture lies in the
possibility of recursive applications of arguments based on
simple, elementary flow trees. The most elementary rooted tree is
$\Yup$. However, verification of Condition A does not require
determination of the flow of the moduli along its edges.
 This aspect becomes important for the rooted tree corresponding to
$(12)3$, which is displayed in Fig. \ref{fig:flowtree2}. We denote this flow tree by $T_{(12)3}$;
the closely related flow trees with the same total charge are $T_{(23)1}$ and $T_{(31)2}$.
\begin{figure}[h!]
\centering
\includegraphics[totalheight=8cm]{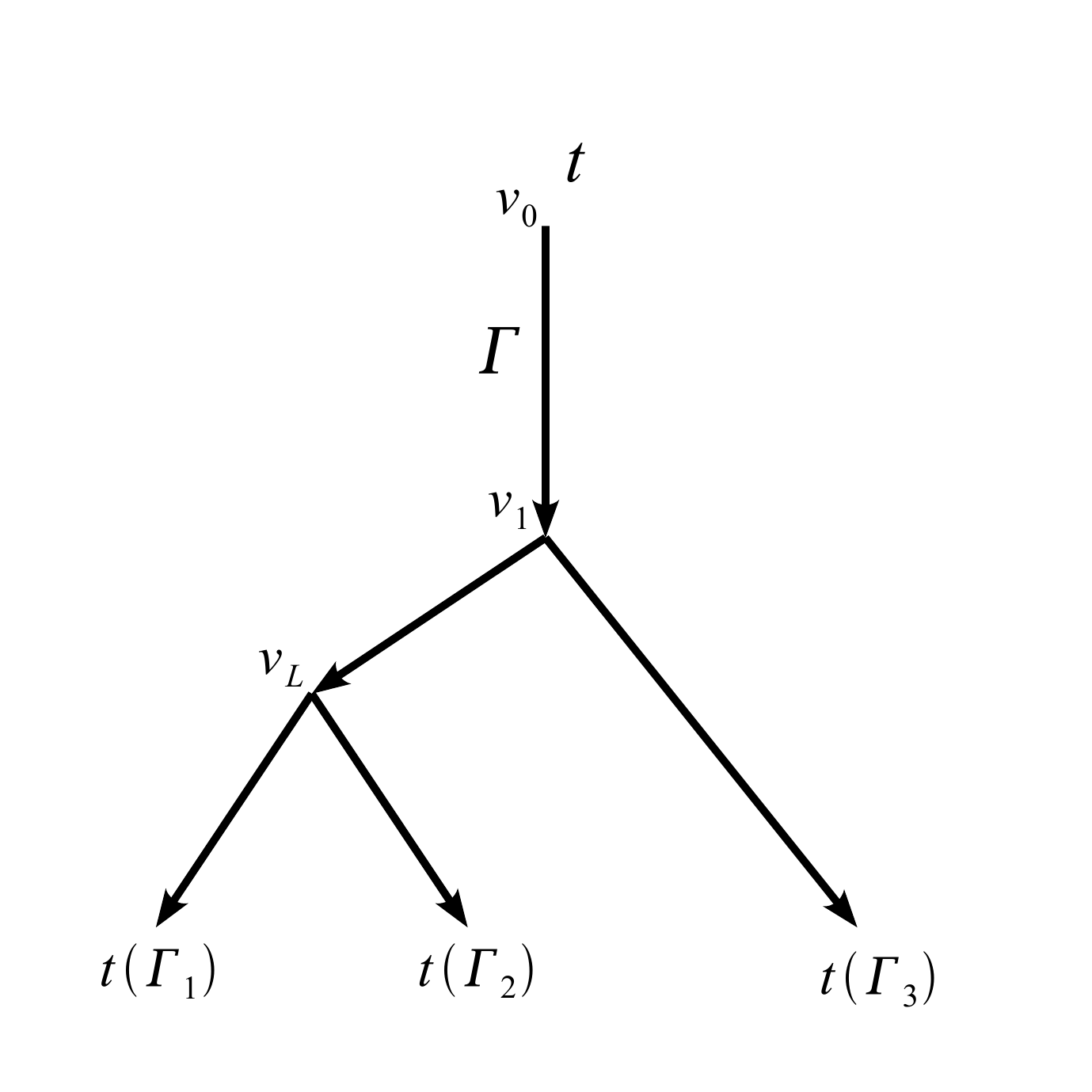}
\caption{Flow tree $T_{(12)3}$ corresponding to $(12)3$.}
\label{fig:flowtree2}
\end{figure}
Assuming that Condition B is satisfied, stability of the split at $v_1$ is determined by
$\sgn\left(\,I_{(1+2)3}\,\im(Z(\Gamma_1+\Gamma_2,t)\bar
Z(\Gamma_3,t))\,\right)$, and similarly the stability of $v_L$ by
$\sgn\left(\,I_{12}\, \im(Z(\Gamma_1,t_1) \bar
  Z(\Gamma_2,t_1))\,\right)$. One might think that the flow of the
periods must be determined explicitly to determine $\sgn\left(\,I_{12}\, \im(Z(\Gamma_1,t_1) \bar
  Z(\Gamma_2,t_1))\,\right)$ in terms of $t$, but this follows
fortunately more directly from Eq. (\ref{eq:flowimZZ}). To see this,  take first $\Gamma'=\Gamma_3$ in
Eq. (\ref{eq:flowimZZ}), which shows that $v_1$ corresponds to the
flow parameter $\rho_1$:
\be
\rho_1=\frac{\im
  (Z(\Gamma_1+\Gamma_2,t)\bar Z(\Gamma_3,t))}{ \left<\Gamma_1+\Gamma_2,\Gamma_3\right>|Z(\Gamma_1+\Gamma_2+\Gamma_3,t)|}.\non
\ee
If one now substitutes $\rho_1$ for $\rho=\sqrt{G_4}/2r$ and $\Gamma'=\Gamma_1$ in
Eq. (\ref{eq:flowimZZ}), and uses that
$Z(\Gamma_1+\Gamma_2,t_1)||Z(\Gamma_3,t_1)$ and $e^U>0$, one finds the desired result
\begin{eqnarray}
\label{eq:imZZ1}
&&\sgn\left( \im\left(Z(\Gamma_1,t_1)\bar Z(\Gamma_2,t_1)
  \right)\right)=\\
&&\sgn \left( \frac{I_{(2+3)1}}{I_{(1+2)3}}\,\,\im\left( 
      Z(\Gamma_1+\Gamma_2,t)\bar Z(\Gamma_3,t)\right)+\,\im\left( Z(\Gamma_1,t)\bar
        Z(\Gamma_2+\Gamma_3,t)\right)\right), \non 
\end{eqnarray}
A more symmetric way of writing
this is
\begin{eqnarray}
\label{eq:stabv1}
&& \sgn\left( \im\left(Z(\Gamma_1,t_1)\bar Z(\Gamma_2,t_1)
  \right)\right)= \\
&&\sgn \left(\sum_{\mathrm{cyclic\,\,permutations\,\, of\,\,} ijk} \frac{I_{(i+j)k}}{I_{(1+2)3}}\,\im \left( Z(\Gamma_i,t)\bar Z(\Gamma_j,t)\right)\right), \non
\end{eqnarray}
which makes more manifest that if $\im\left(Z(\Gamma_1,t_1)\bar Z(\Gamma_2,t_1)
  \right)=0$ all three central charges are aligned. It also shows that
  we have determined the stability at $v_L$ of the two other trees 
  $T_{(23)1}$ and $T_{(31)2}$; the only part which changes is 
$I_{(1+2)3}$. These expressions show that Condition A can be
determined for any flow tree in terms of $t$ in an algorithmic
way. Note that $T_{(12)3}$ can satisfy Condition A, while
$T_{(12)}$ does not if evaluated at $t$. See the discussion on page \pageref{subsubsec:N3}
and further for more details about this for D4-D2-D0 branes. 
If Condition B is satisfied and the splits of the charges are primitive, one can
determine the contribution to the index from this flow tree:
\begin{eqnarray}
\label{eq:cont3endp}
\Omega((12)3;t)=&&\textstyle{\frac{1}{4}}(-1)^{I_{12}+I_{31}+I_{23}}\,I_{(1+2)3}\,I_{12} \,
\Omega(\Gamma_1)\,\Omega(\Gamma_2)\,\Omega(\Gamma_3)\non \\
&& \times\left( \sgn \left(\im(Z(\Gamma_1+\Gamma_2,t)\bar Z(\Gamma_3,t))\right) +
    \sgn(\,I_{(1+2)3})\right) \\
&&\times\left( \sgn\left(\im(Z(\Gamma_1,t_1)\bar
    Z(\Gamma_2,t_1))\right) + \sgn(I_{12})\right). \non
\end{eqnarray}
The contribution of a tree with non-primitive splits has probably a very
similar structure. The analysis of Subsections \ref{subsec:KSwc} 
and \ref{subsec:non-primitive} suggests that the $\Omega$'s should be replaced by $\bar
\Omega$'s and that a non-trivial overall factor might appear.

These generic and exact expressions are useful to make generic statements about
attractor flow trees. A non-trivial question is for example whether the indices based
on attractor flow trees only jump when walls of marginal stability for the
total charge $\Gamma$ are crossed, and not when something non-trivial
happens for the subcharges at the relevant trivalent
vertices. This is of course required by physical arguments, although
not completely obvious for flow trees. Ref. \cite{Denef:2007vg} shows that this is indeed the case
in several concrete examples with $N=3$. Using Eq. (\ref{eq:stabv1})
one can show that for $N=3$, the interplay between the
three trees $T_{(12)3}$, $T_{(23)1}$ and $T_{(31)2}$ is such that
the index does not change when the stability of the splits at
$v_{L,R}$ changes. Eq. (\ref{eq:stabv1}) shows that  $\Omega((12)3;t)$ can jump, if 
\be
\label{eq:stabsubtree}
\sgn\left(\sum_{\mathrm{cyclic\,\,permutations\,\, of\,\,} ijk} I_{(i+j)k}\,\im \left( Z(\Gamma_i,t)\bar Z(\Gamma_j,t)\right)\right)
\ee
goes from $\pm 1$ to $\mp 1$ via 0. This is not necessarily a wall of marginal stability
for $\Gamma=\sum_{i=1}^3\Gamma_i$. However, the contributions to the index
of the trees $T_{(23)1}$ and $T_{(31)2}$, respectively
$\Omega((23)1;t)$ and $\Omega((31)2;t)$, are very similar to
$\Omega((12)3;t)$. In particular, they also contain a factor (\ref{eq:stabsubtree}) and will
thus also jump when $\Omega((12)3;t)$ does. To show that 
$\Omega(\Gamma;t)$ does not jump, we have to show that the
coefficient of  the term (\ref{eq:stabsubtree}) in
$\Omega((12)3;t)+\Omega((23)1;t)+\Omega((31)2;t)$ is zero, if
(\ref{eq:stabsubtree}) is zero. One can show that if
(\ref{eq:stabsubtree}) vanishes,
$I_{(1+2)3}\,\im(Z(\Gamma_1+\Gamma_2,t)\bar Z(\Gamma_3,t))$ and the
cyclic permutations have all the same sign; this is generically
true in a neighborhood of the hypersurface where
(\ref{eq:stabsubtree}) is zero.  Since $\sum_{\mathrm{cyclic\,\,
    permutations\,\, of}\,\,ijk}I_{(i+j)k}I_{ij}=0$, the coefficient
of (\ref{eq:stabsubtree}) thus vanishes. Note that it is very important here that the stability of the subtree is evaluated at $v_1$ and not at
$v_0$. This result for $N=3$ can be applied inductively. Thus the index 
determined by attractor flow trees does only jump when walls for the
total charge are crossed.

This derivation essentially ignored Condition B. More precisely put, it assumes that if
one of the trees, say $T_{(12)3}$, 
exists as flow tree at some point in moduli space,
it cannot be true that Condition B is not satisfied for $T_{(23)1}$, if Conditions A and C are satisfied
(and similarly for $T_{(31)2}$). To argue that this is correct, assume that this could be the case, and that at least one of the splits of $T_{(23)1}$ is a
wall of anti-marginal stability. If the moduli are then moved to the point
where (\ref{eq:stabsubtree}) vanishes, $T_{(12)3}$ implies that the
three central charges align for $t_1$, whereas $T_{(23)1}$ implies that some will
anti-align, which is a contradiction.

Another application of Eq. (\ref{eq:imZZ1}) is the analysis of walls of threshold stability, these are walls in
moduli space where the central charges of say $\Gamma_L$ and
$\Gamma_R$ get aligned, with $\left<\Gamma_L,\Gamma_R\right>=0$.
For $N=3$, this is for example $\left<\Gamma_1+\Gamma_2,\Gamma_3\right>=0$ or a cyclic
permutation. Specific examples of such cases are discussed in 
Ref. \cite{deBoer:2008fk}. 

\section{D4-D2-D0 BPS-states}
\label{sec:d4d2d0}\setcounter{equation}{0}
This section applies the generic discussion of the previous section to
D4-D2-D0 BPS-states. One of the aims is to construct a BPS partition
function which correctly captures the moduli dependence. The BPS
partition function of $\CN=2$ supergravity in the mixed ensemble
\cite{Ooguri:2004zv} takes the following approximate form:  
\be
\label{eq:expansion}
\mathcal{Z}(\tau,C,t)= \sum_{Q_A}   \Omega(\Gamma;t)\, \exp\left(-2\pi \frac{\beta}{g_\mathrm{s}}|Z(\Gamma,t)| + 2\pi i C^A\, Q_A\right),\nonumber 
\ee
where $A=0,\dots ,b_2$. We will use $\beta/g_\mathrm{s}=\tau_2$ and $C^0=\tau_1$ in the
following. Part 2 of the split attractor flow conjecture suggests the
decomposition of the partition function by rooted trees $T$: 
\be
\label{eq:partitionsum}
\mathcal{Z}(\tau,C,t)=\sum_{T\in \mathcal{T}_P} \mathcal{Z}_{T}(\tau,C,t).
\ee
In contrast to the previous section, a rooted tree $T$ in this sum
corresponds to a nested list of magnetic charges $P_i^A$ with the electric charge unspecified; $\mathcal{T}_P$
is the total set of trees based on nested
lists of magnetic charge vectors $P^A$ with $\sum_{i=1}^N
P^A_i=P^A$. The partition function enumerates all possible
distributions of electric charge over the endpoints of these rooted
trees, and determines as function of $t$ whether they correspond to actual flow trees
and contribute to the index.  This section will always use trees in
this sense. Thus $T_{(11)}$ is a tree with equal magnetic charge
vectors associated to the endpoints, which can still have a non-zero
contribution to the index depending on the electric charges. 

To proceed, we make two simplifications:
\begin{enumerate}
\item $P^0=0$, such that there is no netto D6-brane
  charge. The reason for this simplification is that the microscopic
  description is much better understood for immortal BPS-objects with $P^0=0$
  than for $P^0\neq 0$ by a lift to M-theory
  \cite{Maldacena:1997de}. The near-horizon geometry of the resulting black
  string is AdS$_3\times S^2$ and the degrees of freedom combine to a
  2-dimensional $\mathcal{N}=(4,0)$ conformal field theory \cite{Minasian:1999qn}.
\item $J\to \infty$, which is the large volume limit of the K\"ahler
  moduli space. In this limit, quantum effects to the geometry do not
  play a role such that (relatively) basic geometric arguments
  generally suffice. The D-branes are well described in this limit as coherent sheaves on subspaces of
  $X$.  
\end{enumerate} 

In the large volume limit the magnetic charge $P$ (or equivalently the
divisor wrapped by the D4-branes) must be positive, since it
represents the support of a coherent sheaf. The BPS-states with $P^0=0$, which correspond to a single AdS$_3$ throat in 5
dimensions (or equivalently M5-brane), appear in 4 dimensions as single centered or as
multi-centered supergravity solutions. In particular, BPS-states
corresponding to the principal or polar terms in the partition function appear
as bound states of D6 and anti-D6 branes \cite{Denef:2007vg}. When the moduli are varied
such bound states might in principal decay. However this cannot
happen in the large volume limit $J\to \infty$. Ref. \cite{deBoer:2008fk} shows that for $t^a=\lim_{\lambda\to \infty}
D^{ab}Q_b+i\lambda P^a$, with $D_{ab}=d_{abc}P^c$, an
uplift to 5 dimensions leads to only a single AdS$_3$ throat. Since in the limit $\lambda\to \infty$ the dependence on
$\lambda$ disappears, this limit is closely related to the attractor point for D4-D2-D0 black
holes, which is: $t(\Gamma)=D^{ab}Q_b+i\sqrt{\hat Q_{\bar 0}/P^3}
P^a$ ($\hat Q_{\bar 0}$ is defined in the next subsection). These findings are consistent with the results in
\cite{Manschot:2009ia}, where an analysis of the partition function 
showed that for $t=\lim_{\lambda\to \infty}D^{ab}Q_b+i\lambda P^a$, $\Omega(\Gamma;t)$ equals the CFT index.

Based on these considerations, one could state that the CFT
states are those BPS-states in 4 dimensions, which cannot decay in the
large volume limit. Since we will work exclusively in the large volume
limit, we will use the word ``immortal'' for the objects which cannot decay in this limit and
omit the $t$-dependence of the index: $\Omega(\Gamma)$. These immortal
objects form of course a bigger class than the objects which are
immortal in the whole moduli space. Note that different electric
charges correspond to different attractor points: $\Omega(\Gamma';t(\Gamma))$ does not
correspond to $\Omega(\Gamma')$ generically.

\subsection{BPS mass and stability}
\label{subsec:bpsmass}
The form of the partition function shows that its convergence is
essentially determined by properties of the mass $|Z(\Gamma,t)|$ and of
the indices $\Omega(\Gamma;t)$. The contribution to the partition
function of a flow tree with a single endpoint is known to be
convergent by CFT arguments. However, it is not
evident that the contributions of flow trees with more endpoints always lead to convergent partition functions. This
subsection proofs that this is the case for flow trees with 1, 2 and
3 endpoints with D4-brane charge, which gives strong evidence that this will continue to
hold for $N>3$.

The central charge $Z(\Gamma;t)$ is for $J\to\infty$ given by 
\be
Z(\Gamma,t)=-\int_Xe^{-t}\wedge \Gamma. \non
\ee
The real and imaginary part of $Z(\Gamma,t)$ for D4-D2-D0 BPS-states
are 
\begin{eqnarray}
\label{eq:Zreim}
\re(Z(\Gamma,t))&=&\frac{1}{2}P\cdot(J^2-B^2)+Q\cdot B -Q_0, \\
\im(Z(\Gamma,t))&=&(Q-BP)\cdot J, \non
\end{eqnarray}
where the triple intersection product $d_{abc}$ is used to contract
vectors. For $P\cdot J^2\gg |(Q-\half B)\cdot B-Q_0|,\,|(Q-BP)\cdot J|$, the
mass takes the form:
\be
\label{eq:mass}
|Z(\Gamma,t)|=\frac{1}{2}P\cdot J^2+(Q-\frac{1}{2}BP)\cdot B-Q_0+(Q-B)_+^2,
\ee
where terms of $\mathcal{O}(J^{-2})$ are omitted. Note that at the attractor point
$t(\Gamma)$, $J$ is never sufficiently large such that
Eq. (\ref{eq:mass}) is a valid approximation for
$|Z(\Gamma,t(\Gamma))|$. The charges $Q_a$ naturally take values in
the lattice $\Lambda^*$, dual to $\Lambda$ which has quadratic form
$D_{ab}=d_{abc}P^c$ and signature $(1,b_2-1)$ by the Hodge index
theorem \cite{Griffiths:1978}. $Q_+^2=\frac{(Q\cdot J)^2}{P\cdot J^2}$
is the projection to a positive definite subspace of $\Lambda\otimes \mathbb{R}$
parametrized by $j=J/|J|$. The positive definite combination
$2Q_+^2 -Q^2=Q_+^2-Q_-^2$ is called the majorant
associated to $j$. Two expressions which are invariant under the
action of $\mathbf{K}(k)$ (\ref{eq:periodicity}) are $\hat Q_{\bar
  0}=-Q_0+\half Q^2$ and $Q_a-d_{abc}B^bP^c$.

Expression (\ref{eq:mass}) is potentially problematic, since
$|Z(\Gamma,t)|-\frac{1}{2}P\cdot J^2$ is not obviously bounded below. This would therefore allow the possibility that addition of
electric charge can result in a decrease of the mass, which is clearly
unphysical. This would also have
the direct consequence that if such states are part of the spectrum, the partition function
(\ref{eq:expansion}) with the electric charges in the canonical
ensemble is not convergent, independent of the growth of the index
(except that it is non-zero). 

To explain the problem more concretely, we consider a rooted
tree with $N$ endpoints, with (possibly non-primitive) charges $\Gamma_i$, $i=1,\dots ,N$. To
every endpoint a lattice $\Lambda_i$ with quadratic form
$D_i=d_{abc}P_i^c$ is associated. By a slight abuse of notation, we
use $P=(P_1,P_2,\dots,P_N)\in\Lambda_1\oplus \Lambda_2\oplus \dots
\oplus \Lambda_N$ in addition to $P=\sum_i^N P_i\in \Lambda$; and
similarly for $Q=(Q_1,Q_2,\dots, Q_N)\in \Lambda_1^*\oplus
\Lambda_2^*\oplus \dots\oplus \Lambda_N^*$. Using the duality
invariant expressions one can write the mass as
\be
\label{eq:boundedb}
\frac{1}{2}P\cdot J^2+(Q-B)_+^2 + \sum_{i=1}^N \hat Q_{\bar 0,i}-\frac{1}{2}(Q_i-BP_i)_{i}^2.
\ee
The attractor endpoints only exist for $\hat Q_{\bar 0,i}\geq
-c_{\mathrm{R},i}/24=-(P_i^3+c_2(X)\cdot P_i)/24$, where
$c_{\mathrm{R},i}$ are the CFT central charges of the endpoints \cite{Maldacena:1997de}. The problem is thus reduced to the fact that
the quadratic form $(Q-B)_+^2 - \sum_{i=1}^N
\frac{1}{2}(Q_i-BP_i)_{i}^2$ is indefinite with signature $(Nb_2-N+1,N-1)$. However, this section will show
that it is positive definite if Condition A is satisfied:
\be
\label{eq:claim}
\mathrm{\bf  Condition\,\,A}\qquad \Longrightarrow \qquad
(Q-B)_+^2 - \sum_{i=1}^N \frac{1}{2}(Q_i-BP_i)_{i}^2 \geq 0,
\ee
thus it is in particular always positive definite for flow trees.

To this end, we start by taking a closer look at Condition A for these
BPS-states. From Eq. (\ref{eq:Zreim}) is clear that the central charge
gets aligned along the positive real axis of the 
$\mathbb{C}$-plane for $J\to \infty$, the infinitesimal angle with the real axis can
nevertheless vary, which leads to interesting wall-crossing
phenomena. For a split $(\Gamma_1,\Gamma_2)$, $I_{12}\,\im( 
Z(\Gamma_1,t)\bar Z(\Gamma_2,t))\geq 0$ becomes for $J\to \infty$
and constituent charges $\Gamma_1=(0,P_1,Q_1,Q_{0,1})$ and $\Gamma_2=(0,P_2,Q_2,Q_{0,2})$:
\be
\label{eq:stabcondition}
I_{12}\, \left(P_1\cdot J^2\,(Q_2-BP_2)\cdot
  J-P_2\cdot J^2 (Q_1-BP_1)\cdot J\right)\leq 0,
\ee
where only the leading order in $J$ is kept. Note that for this
approximation no walls of marginal stability exist for Calabi-Yaus
with $b_2=1$. The stability condition is invariant under rescalings of $J$:
$B+iJ\to B+i\lambda J$  with $\lambda>0$. The space
of variations of Eq. (\ref{eq:Istability}) due to $J$ has therefore $b_2-1$
dimensions, and is essentially a real projective space. Similarly,
variations of $B$ which are proportional to $J$ do not change the stability condition. Thus the total space of
stability conditions in the case of interest has real dimension $2(b_2-1)$.
Since Eq. (\ref{eq:stabcondition})
is either $\pm \infty$ or 0 for $J\to \infty$,
we define a homogeneous function of degree 0: 
\be
\label{eq:Istability}
\mathcal{I}(\Gamma_1,\Gamma_2;t)=\frac{P_1\cdot J^2\,(Q_2-BP_2)\cdot
  J-P_2\cdot J^2 (Q_1-BP_1)\cdot J}{\sqrt{P_1\cdot J^2\,P_2\cdot
    J^2\,P\cdot J^2}}.
\ee
This has the special property that
\be
\mathcal{I}(\Gamma_1,\Gamma_2;t)^2=|Z(\Gamma_1,t)|+|Z(\Gamma_2,t)|-|Z(\Gamma,t)|. \non
\ee

Eq. (\ref{eq:stabcondition}) is reminiscent of the stability
condition for sheaves on surfaces, but already when subleading powers in $J$ are
taken into account, the equivalence between D-branes and coherent
sheaves disappears \cite{Diaconescu:2007bf}. 
Note that for $P_2=\vec 0$, the wall of marginal stability is given by
$Q_2\cdot J=0$. In case $P_2=\vec 0$, $Q_2$ must be a positive vector
in the large volume limit, since it represents the support of a coherent sheaf. Therefore, $Q_2\cdot J$ lies at the
boundary of the K\"ahler cone, and such walls are not crossed, since we restrict ourselves to the K\"ahler
cone. The assumption that the $P_i$ are positive for every endpoint,
as was assumed in writing Eq. (\ref{eq:boundedb}), is thus consistent
with the restriction to this regime of the moduli space. 

For a rooted tree, Condition A can be verified by the product
$S(T,t)$, which can be determined iteratively using Eq. (\ref{eq:imZZ1}). 
To determine the contribution to the partition function of a rooted
tree, also Conditions B and C on page \pageref{page.areq} should be verified. 
The existence of the attractor point of all endpoints (Condition C) is determined by the CFT partition functions, the attractor point exists if $\hat
Q_{\bar 0,i}\geq -c_{\mathrm{R},i}/24$ (note again that for $\hat Q_{\bar 0}<0$
multicenter solutions are required, but they cannot decay in the
large volume limit). Finally, Condition B is essentially assumed by neglecting the  lower orders in $J$ to the
stability condition: $\re( Z(\Gamma,t))\approx\half P\cdot J^2\gg
0$. Alternatively, one can estimate the flow of the moduli as in Ref. \cite{Andriyash:2008it},
to see that in the very large volume limit the central charges will
never be anti-parallel at the vertices.

The remaining part of this subsection will proof implication
(\ref{eq:claim}) for trees with $1,2$ and 3 endpoints, and comment briefly on $N>3$. Also the
contributions to the partition functions of these trees are discussed.   

\subsubsection{One endpoint}
\vspace{-.2cm}
This case is trivial, since the potentially harmful term can be rewritten as
\be
\label{eq:mass2}
(Q-B)_+^2-\frac{1}{2}(Q-B)^2=\frac{1}{2}(Q-B)_+^2-\frac{1}{2}(Q-B)_-^2,
\ee
which is positive definite on $\Lambda$. Before moving on to $N=2$, a couple
properties of the partition function for $N=1$ are reviewed. The partition
function $\mathcal{Z}_{T_1}(\tau,C,t)$ can be written in the following form: 
\begin{eqnarray}
\label{eq:expansion}
\mathcal{Z}_{T_1}(\tau,C,t)&=& \sum_{Q_{0}, \, Q}\nonumber
\Omega(P,Q,Q_{0})\, (-1)^{P\cdot Q} \\
&&\times e\left(-\bar \tau (- Q_{0}+Q^2/2) +  \tau (Q-B)_+^2/2+ \bar \tau
(Q-B)_-^2/2 + C\cdot (Q-B/2)\right), \nonumber 
\end{eqnarray}
where the leading term to the mass in (\ref{eq:mass}) is omitted since
it leads to a modular invariant overall factor. The lower bound of the
mass together with the expected growth of the index imply that the series is convergent. 

The CFT, which describes the degrees of freedom of immortal objects in
the large volume limit, contains a spectral flow symmetry, which implies that the indices $\Omega(P,Q,Q_0)$ only
depend on $\hat Q_{\bar 0}=-Q_0+\half Q^2$, and the representative
$\mu$ of $Q-\half P$ \footnote{The shift by $\half P$
  arises since $Q$ is valued in the shifted lattice $\Lambda^*+\half P$ \cite{Freed:1999vc,
    Minasian:1997mm}.} in the coset $\Lambda^*/\Lambda$ \cite{deBoer:2006vg,
  Gaiotto:2006wm}. This symmetry is also a well-known property of the dual 
supergravity in AdS$_3$ \cite{deBoer:2008fk}. Modularity and spectral flow furthermore imply that the CFT
elliptic genus can be decomposed in a theta function and a
vector-valued modular form $h_{P,Q-\frac{1}{2}P}(\tau)$ \cite{deBoer:2006vg,
  Gaiotto:2006wm}:
\be
\label{eq:vectorvalued}
h_{P,Q-\frac{1}{2}P}(\tau)=\sum_{Q_0} \Omega(P,Q,Q_0)\,q^{-Q_{0}+\frac{1}{2} Q^2},
\ee
which satisfy the special property that
$h_{P,Q-\frac{1}{2}P}(\tau)=h_{P,Q-\frac{1}{2}P+k}(\tau)$ with $k\in
\Lambda$. The definition (\ref{eq:vectorvalued}) can be found in the
existing literature, however Subsection
\ref{subsec:non-primitive} gives evidence for replacing the integer
coefficients $\Omega(P,Q,Q_0)$ by the rational coefficients $\bar \Omega(P,Q,Q_0)$ for compatibility with $S$-duality.

%
%
%
%

\subsubsection{Two endpoints}
\label{subsubsec:twoend}
\vspace{-0.2cm}
This case is dealt with by Ref. \cite{Manschot:2009ia}. The
potentially problematic term is in this case
\be
\label{eq:QJ}
(Q-B)_+^2 -\frac{1}{2}(Q_1-B)_{1}^2-\frac{1}{2}(Q_2-B)_{2}^2.
\ee
To proof that this quantity is positive definite if $S(T_{12},t)\neq 0$ is
satisfied, we can replace $Q_i-BP_i$ by $Q_i$ without loss of generality.
We proceed by writing the quantities in Eqs. (\ref{eq:stabcondition}) and (\ref{eq:QJ}) in terms of vectors in
$(\Lambda_1\oplus \Lambda_2)\otimes \mathbb{R}$, such that we can
apply techniques of Refs. \cite{Zwegers:2000, Gottsche:1996}. Define the unit
vectors $\mathcal{J}_2$, $\mathcal{P}_{12}$ and $s_{12}\in
(\Lambda_{1}\oplus \Lambda_2)\otimes \mathbb{R}$ by 
\begin{eqnarray}
\label{eq:vectors}
&&\mathcal{J}_{2}=\frac{(J,J)}{\sqrt{(P_1+P_2)\cdot J^2}} ,\qquad
\mathcal{P}_{12}=\frac{(-P_2,P_1)} {\sqrt{(P_1+P_2)P_1P_2}},  \\
&&s_{12}=\frac{(-P_2\cdot J^2\,J,P_1\cdot
  J^2\,J)}{\sqrt{(P_1+P_2)\cdot J^2\,P_1\cdot J^2\,P_2\cdot J^2}}. \non
\end{eqnarray}
Innerproducts of these vectors with $Q=(Q_1,Q_2)$ give the familiar
quantities in $S(T_{12})$: $\mathcal{P}_{12}\cdot Q =I_{12}/\,\sqrt{PP_1P_2}$ and $s_{12}\cdot
Q=\mathcal{I}(\Gamma_1,\Gamma_2,iJ)$. These vectors satisfy:
\begin{proposition}
\label{prop:1}
\be
s_{12}\cdot \mathcal{J}_2=0,\quad \mathcal{J}_2\cdot \mathcal{P}_{12}=0,\quad 
s_{12}\cdot \mathcal{P}_{12}\geq 1.
\ee
\end{proposition}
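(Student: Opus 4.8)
The plan is to verify each of the three relations in Proposition~\ref{prop:1} by direct computation using the explicit definitions \eqref{eq:vectors} and the inner product on $(\Lambda_1\oplus\Lambda_2)\otimes\mathbb{R}$, which is the orthogonal sum of $D_1=d_{abc}P_1^c$ on $\Lambda_1$ and $D_2=d_{abc}P_2^c$ on $\Lambda_2$. First I would record the pairing conventions: for $u=(u_1,u_2)$ and $w=(w_1,w_2)$ one has $u\cdot w = u_1\cdot_1 w_1 + u_2\cdot_2 w_2$ where $\cdot_i$ denotes contraction with $d_{abc}P_i^c$, so that e.g.\ $J\cdot_i J = P_i\cdot J^2$ and $P_j\cdot_i J = P_i P_j J$ (symmetric in $i,j$).

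For the first identity, $s_{12}\cdot\mathcal{J}_2$: up to the positive normalization factors this is $(-P_2\cdot J^2\, J)\cdot_1 J + (P_1\cdot J^2\, J)\cdot_2 J = -P_2\cdot J^2\,(P_1\cdot J^2) + P_1\cdot J^2\,(P_2\cdot J^2) = 0$. For the second, $\mathcal{J}_2\cdot\mathcal{P}_{12}$: up to normalization this is $J\cdot_1(-P_2) + J\cdot_2 P_1 = -P_1 P_2 J + P_2 P_1 J = 0$, using symmetry of the triple product. Both are one-line cancellations. The only mild subtlety is sign/ordering bookkeeping in the definition of $\cdot_i$, which is routine.

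The third relation, $s_{12}\cdot\mathcal{P}_{12}\ge 1$, is the substantive one and is where I expect the main obstacle. Up to the (positive) product of all the square-root normalizations, the numerator is $(-P_2\cdot J^2\, J)\cdot_1(-P_2) + (P_1\cdot J^2\, J)\cdot_2 P_1 = P_2\cdot J^2\,(P_1 P_2 J) + P_1\cdot J^2\,(P_1 P_2 J) = (P_1 P_2 J)\,(P_1+P_2)\cdot J^2$, using that $J\cdot_1 P_2 = P_1 P_2 J = J\cdot_2 P_1$. Dividing by the normalization $\sqrt{(P_1+P_2)P_1P_2}\cdot\sqrt{(P_1+P_2)\cdot J^2\, P_1\cdot J^2\, P_2\cdot J^2}$ and squaring, the claim $s_{12}\cdot\mathcal{P}_{12}\ge 1$ becomes
\be
\frac{\big(P_1 P_2 J\big)^2\,(P_1+P_2)\cdot J^2}{(P_1+P_2)P_1P_2\,\cdot\, P_1\cdot J^2\,\cdot\, P_2\cdot J^2}\ \ge\ 1,\non
\ee
i.e.\ $(P_1 P_2 J)^2 \ge (P_1 P_2(P_1+P_2))\cdot\dfrac{P_1\cdot J^2\,P_2\cdot J^2}{(P_1+P_2)\cdot J^2}$, where $P_1 P_2(P_1+P_2) = d_{abc}P_1^a P_2^b(P_1^c+P_2^c)$. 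The hard part is establishing this last inequality: it should follow from the Hodge index theorem, which forces the bilinear form $d_{abc}P^c$ (for $P$ in the K\"ahler cone) to have Lorentzian signature $(1,b_2-1)$, combined with a Cauchy--Schwarz / reverse Cauchy--Schwarz estimate applied to the vectors $P_1$, $P_2$, $J$ against the form $d_{abc}(P_1+P_2)^c$. Concretely I would work inside the single lattice $\Lambda$ with form $D=d_{abc}(P_1^c+P_2^c)$, in which $J$ is positive-timelike and $P_1,P_2$ are (effective, hence) nonnegative, write $\widehat{v}=v-\frac{D(v,J)}{D(J,J)}J$ for the projection orthogonal to $J$ (on which $D$ is negative definite), and expand both sides; the inequality should reduce to $D(\widehat{P_1},\widehat{P_1})\,D(\widehat{P_2},\widehat{P_2})\ge D(\widehat{P_1},\widehat{P_2})^2$, which is Cauchy--Schwarz for the negative-definite form $-D$ on the orthogonal complement of $J$. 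I expect the bookkeeping of normalizations and the precise handling of the $d_{abc}$ contractions (especially keeping $D$ versus $D_1,D_2$ straight) to be the fiddly part, but no deep input beyond Hodge index and Cauchy--Schwarz should be needed; equality holds precisely when $\widehat{P_1}$ and $\widehat{P_2}$ are proportional, i.e.\ $P_1\parallel P_2$ modulo $J$.
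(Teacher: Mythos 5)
Your treatment of the first two identities, the positivity of $s_{12}\cdot\CP_{12}$, and the reduction of $s_{12}\cdot\CP_{12}\geq 1$ to the explicit inequality $(P_1P_2J)^2\,(P_1+P_2)\cdot J^2\geq P_1P_2(P_1+P_2)\cdot P_1\cdot J^2\cdot P_2\cdot J^2$ are all correct. The gap is in the final step. You propose to obtain this inequality from Cauchy--Schwarz on the $J$-orthogonal complement for the single Lorentzian form $D=d_{abc}(P_1+P_2)^c$. But the Gram data of $P_1,P_2,J$ with respect to that form are $D(P_i,J)=P_i(P_1+P_2)J=P_i^2J+P_1P_2J$, $D(P_i,P_i)=P_i^2(P_1+P_2)$, $D(J,J)=(P_1+P_2)\cdot J^2$, so the resulting Cauchy--Schwarz inequality involves $P_1^3$, $P_2^3$, $P_1^2J$, $P_2^2J$ --- intersection numbers that simply do not occur in the target. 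The claimed reduction is therefore not an identity, and the step ``expand both sides; the inequality should reduce to Cauchy--Schwarz for $\widehat{P_1},\widehat{P_2}$'' does not go through as written. A correct elementary argument is nearby: expand $(P_1+P_2)\cdot J^2$ and $P_1P_2(P_1+P_2)$ and group the four terms pairwise; the inequality then splits into $P_1\cdot J^2\bigl[(P_1P_2J)^2-P_1^2P_2\cdot P_2\cdot J^2\bigr]\geq 0$ and $P_2\cdot J^2\bigl[(P_1P_2J)^2-P_1P_2^2\cdot P_1\cdot J^2\bigr]\geq 0$, each of which is the \emph{reverse} Cauchy--Schwarz inequality for the Lorentzian forms $d_{abc}P_2^c$ and $d_{abc}P_1^c$ respectively, with $J$ timelike. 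So Hodge index must be invoked for $P_1$ and $P_2$ separately, not only for $P_1+P_2$.

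For comparison, the paper avoids the explicit inequality altogether: it observes that $\CJ_2$, $\CP_{12}$, $s_{12}$ are unit (positive-norm) vectors with $\CJ_2$ orthogonal to the other two, inside $\Lambda_1\oplus\Lambda_2$ of signature $(2,2b_2-2)$; if they were linearly independent with $|s_{12}\cdot\CP_{12}|<1$ their Gram matrix would be positive definite, giving three positive directions, a contradiction. Hence $|s_{12}\cdot\CP_{12}|\geq 1$, and positivity (your explicit numerator $(P_1P_2J)\,(P_1+P_2)\cdot J^2>0$) fixes the sign. That determinant/signature argument is essentially the abstract packaging of the two reverse Cauchy--Schwarz inequalities above, and it is the template reused in Propositions \ref{prop:2}--\ref{prop:4}; if you want to keep your computational route, you should either adopt that packaging or repair the final inequality as indicated.
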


\begin{proof}
The first two identies follow trivially. It is straightforward to show that
the third identity is positive. To show that it is $\geq 1$, notice that
the lattice $\Lambda_{1}\oplus \Lambda_2$ has signature $(2,2b_2-2)$.  The
three vectors $\mathcal{J}_2$, $\mathcal{P}_{12}$ and $s_{12}$ are positive
definite and since $\mathcal{J}_2$ is orthogonal with $s_{12}$ and
$\mathcal{P}_{12}$, they span a lattice with signature $(2,1)$ if they
are all linearly independent. Therefore,
\be
\left|\begin{array}{ccc} 1 & 0  & 0  \\
0  & 1 & s_{12} \cdot \CP_{12} \\
0  & s_{12} \cdot \CP_{12} &
1 \end{array}\right|<0, \non
\ee
which is equivalent to $ s_{12} \cdot \CP_{12} \geq 1$, where equality
only holds if $s_{12}=\CP_{12}$.
\end{proof}
\noi In terms of these vectors, the claim becomes:
\begin{proposition}
\label{prop:2}
 For $Q=(Q_1,Q_2)\in \Lambda_{1}^*\oplus \Lambda_2^*$, $\sgn(s_{12}\cdot Q)-\sgn(\mathcal{P}_{12}\cdot Q)\neq 0$ implies
\be
\label{eq:QJ2}
(Q_1)_1^2+(Q_2)_2^2-(Q\cdot \CJ_2)^2<0.
\ee
\end{proposition}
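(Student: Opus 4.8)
The plan is to reduce the claim to a statement about the signature of a rank-$3$ sublattice of $\Lambda_1 \oplus \Lambda_2$ and to exploit Proposition \ref{prop:1}. First I would observe that the quantity in \eqref{eq:QJ2} can be rewritten intrinsically: since $\CJ_2$ is a unit positive vector, $(Q_1)_1^2 + (Q_2)_2^2 - (Q\cdot \CJ_2)^2$ is (up to sign) the norm of the component of $Q$ orthogonal to $\CJ_2$ inside the lattice $\Lambda_1 \oplus \Lambda_2$ of signature $(2, 2b_2-2)$. Writing $Q^\perp = Q - (Q\cdot \CJ_2)\,\CJ_2$, we have $(Q^\perp)^2 = Q^2 - (Q\cdot \CJ_2)^2 = (Q_1)_1^2 + (Q_2)_2^2 - (Q\cdot \CJ_2)^2$, so \eqref{eq:QJ2} is exactly the assertion that $(Q^\perp)^2 < 0$, i.e. that $Q^\perp$ lies in the negative-definite part of the orthogonal complement $\CJ_2^\perp$, which has signature $(1, 2b_2-2)$.

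Next I would use that $s_{12}$ and $\CP_{12}$ both lie in $\CJ_2^\perp$ (by the first two identities of Proposition \ref{prop:1}), and that both are positive vectors. In a lattice of signature $(1, 2b_2-2)$ there is a unique positive direction up to scaling, so the positive cone is one-dimensional; hence $s_{12}$ and $\CP_{12}$ span at most a $(1,1)$ sublattice of $\CJ_2^\perp$, and $s_{12}\cdot\CP_{12} \geq 1$ with equality iff $s_{12} = \CP_{12}$. The key point is that $Q^\perp$ decomposes relative to the (at most $2$-dimensional) span of $s_{12}$ and $\CP_{12}$: write $Q^\perp = a\, s_{12} + b\, \CP_{12} + w$ where $w$ is orthogonal to both $s_{12}$ and $\CP_{12}$ inside $\CJ_2^\perp$. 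Since the span of $s_{12}, \CP_{12}$ contains the unique positive direction, $w$ is necessarily negative-definite, $w^2 \leq 0$. So it suffices to show the restriction of the form to $\mathrm{span}(s_{12},\CP_{12})$ is negative on the relevant vectors, i.e. that $(a\,s_{12} + b\,\CP_{12})^2 < 0$.

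Now the hypothesis $\sgn(s_{12}\cdot Q) - \sgn(\CP_{12}\cdot Q) \neq 0$ enters. Note $s_{12}\cdot Q^\perp = s_{12}\cdot Q$ and $\CP_{12}\cdot Q^\perp = \CP_{12}\cdot Q$ since $s_{12},\CP_{12} \perp \CJ_2$; and $s_{12}\cdot Q^\perp = a + b\,(s_{12}\cdot\CP_{12})$, $\CP_{12}\cdot Q^\perp = a\,(s_{12}\cdot\CP_{12}) + b$. The Gram matrix of $(s_{12},\CP_{12})$ is $\left(\begin{smallmatrix} 1 & c \\ c & 1\end{smallmatrix}\right)$ with $c = s_{12}\cdot\CP_{12} \geq 1$, which has determinant $1 - c^2 \leq 0$, so this sublattice is Lorentzian (or degenerate). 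The condition that the two inner products $s_{12}\cdot Q$ and $\CP_{12}\cdot Q$ have strictly opposite signs (or one vanishes while the other does not — the $\neq 0$ version of $\sgn$ differences) forces $(a,b)$ into the region where $a\,s_{12}+b\,\CP_{12}$ has negative norm: concretely, $(a\,s_{12}+b\,\CP_{12})^2 = a^2 + b^2 + 2abc$, and one computes that this has the opposite sign to $(s_{12}\cdot Q^\perp)(\CP_{12}\cdot Q^\perp)$ up to a positive factor when $c > 1$ (the $c=1$ degenerate case, $s_{12}=\CP_{12}$, must be handled separately but then the two inner products are equal and cannot have opposite signs, so the hypothesis is vacuous or forces one of them to vanish). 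Combining $(a\,s_{12}+b\,\CP_{12})^2 < 0$ with $w^2 \leq 0$ gives $(Q^\perp)^2 < 0$, which is \eqref{eq:QJ2}.

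The main obstacle I anticipate is the careful bookkeeping in the last step: translating the sign condition on $(s_{12}\cdot Q, \CP_{12}\cdot Q)$ into the sign of $a^2 + b^2 + 2abc$ requires inverting the Gram matrix and tracking signs through the case $c > 1$ versus the degenerate case $c = 1$, and one must make sure the non-strict "$\sgn$ differs, allowing a zero" hypothesis still delivers a strict inequality in \eqref{eq:QJ2} — this should work because if exactly one of the inner products vanishes, the other is nonzero, and the corresponding vector $a\,s_{12}+b\,\CP_{12}$ still lands in the open negative cone. The rest is linear algebra in a signature $(2,1)$ lattice, which is routine once the decomposition $Q^\perp = a\,s_{12}+b\,\CP_{12}+w$ is set up.
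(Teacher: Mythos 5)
Your proof is correct and is essentially the paper's argument in different packaging: the paper bounds $Q^2-(Q\cdot \CJ_2)^2$ via the sign of the $4\times 4$ Gram determinant of $Q,\ \CJ_2,\ \CP_{12},\ s_{12}$ (these spanning a space of signature $(2,2)$), and the resulting bound $\bigl((Q\cdot \CP_{12})^2+(Q\cdot s_{12})^2-2\,(Q\cdot \CP_{12})(Q\cdot s_{12})\,c\bigr)/(1-c^2)$ is exactly your $(a\,s_{12}+b\,\CP_{12})^2$ obtained by inverting the $2\times 2$ Gram matrix, with your $w^2\le 0$ playing the role of the determinant's sign. Both arguments then close identically: Proposition \ref{prop:1} gives $c=s_{12}\cdot \CP_{12}>1$ in the non-degenerate case, and the hypothesis forces $(Q\cdot s_{12})(Q\cdot \CP_{12})\le 0$ with at least one factor nonzero, yielding the strict inequality.
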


\begin{proof}
We can assume that $\mathcal{P}_{12}$ and $s_{12}$ are linearly
independent, since otherwise $\sgn(s_{12}\cdot
Q)-\sgn(\mathcal{P}_{12}\cdot Q)=0$. Therefore, $Q$,
$\mathcal{J}_{2}$, $\mathcal{P}_{12}$ and $s_{12}$ span generically a
subspace of $\Lambda_1\oplus\Lambda_2$ with signature $(2,2)$, or else $Q$ is a
linear combination of $\mathcal{J}_{2}$, $\mathcal{P}_{12}$ and $s_{12}$. Therefore,
\be
\label{eq:determinant} 
\left|\begin{array}{cccc} Q^2 & Q\cdot \mathcal{J}_2  & Q\cdot
    \mathcal{P}_{12}  & Q\cdot s_{12}  \\
Q\cdot \mathcal{J}_{2}  & 1 & 0 & 0\\
Q\cdot \mathcal{P} _{12} & 0 & 1 & s_{12} \cdot \CP_{12} \\
Q\cdot s_{12}  & 0 & s_{12}  \cdot  \CP_{12}&
1 \end{array}\right|\geq 0, \non
\ee
which is equivalent to
\be
Q^2-(Q\cdot \CJ_2)^2\leq \frac{(Q\cdot
  \CP_{12})^2 + (Q\cdot s_{12})^2 -2\,Q\cdot \CP_{12}\,Q\cdot s_{12}
  \,s_{12\cdot \CP_{12}}}{1-(s_{12}\cdot \CP_{12})^2}.\non
\ee
Since $\sgn(s_{12}\cdot Q)-\sgn(\mathcal{P}_{12}\cdot Q)\neq 0$
implies $Q\cdot \CP_{12}\,Q\cdot s_{12}\leq 0$, the proposition follows. 
\end{proof}
 
Before we continue with $N=3$, we elaborate a bit more on the
contribution of $N=2$ flow trees to the partition function. To
construct the partition function, first the contribution of the flow
tree to the index must be determined. We assume here
that the magnetic vectors are primitive, such that the primitive 
wall-crossing formula can be used. Subsection
\ref{subsec:non-primitive} comments on the implications of
non-primitive wall-crossing for the partition function.  

Since the D0-brane charges
$Q_{0,i}$ do not appear in the stability condition, the derivation of
the jump becomes somewhat more complicated. To determine the change
between two adjacent chambers $\mathcal{C}_\mathrm{A}$ and
$\mathcal{C}_\mathrm{B}$, the spectrum can be truncated to states with charges
$\Gamma_1=(P_1,Q_1,Q_{0,1})$, $\Gamma_2=(P_2,Q_2,Q_{0,2})$ and
$\Gamma=(P,Q,Q_{0})$ with $(P_1,Q_1)+(P_2,Q_2)=(P,Q)$. Here the
$(P_i,Q_i)$ are kept fixed, but the $Q_{0,i}$ are not since the
wall is independent of $Q_{0(,i)}$. Eq. (\ref{eq:KSformula}) can thus
be truncated to 
\be
\label{eq:primKSproduct}
\prod_{Q_{0,1}}
T_{\Gamma_1}^{\Omega(\Gamma_1)}\,\prod_{Q_{0}}\,T_{\Gamma}^{\Omega(\Gamma;
  t_\mathrm{A})}\,\prod_{Q_{0,2}} T_{\Gamma_2}^{\Omega(\Gamma_2)}=\prod_{Q_{0,2}} T_{\Gamma_2}^{\Omega(\Gamma_2)}\,\prod_{Q_{0}}\,T_{\Gamma}^{\Omega(\Gamma;
  t_\mathrm{B})}\,\prod_{Q_{0,1}} T_{\Gamma_1}^{\Omega(\Gamma_1)}.
\ee
The Lie algebra elements $e_\Gamma$ are central. Using the
Baker-Campbell-Hausdorff formula for this algebra
$e^Xe^Y=e^Ye^{[X,Y]}e^X$, one can derive that the change in the index
across the wall is:
\begin{eqnarray}
\label{eq:DeltaprimitiveII}
\Delta\Omega(\Gamma;t_\mathrm{A}\to t_\mathrm{B})&=&(-1)^{P_1\cdot
  Q_2-P_2\cdot Q_1-1}\left(P_1\cdot Q_2-P_2\cdot
  Q_1\right)\\
&&\times \sum_{Q_{0,1}+Q_{0,2}=Q_0}\Omega(\Gamma_1;t_\mathrm{A})\,
\Omega(\Gamma_2;t_\mathrm{B}). \non
\end{eqnarray}
This change of the index was assumed in Ref. \cite{Manschot:2009ia},
but not derived from the KS-formula.

Since Eq. (\ref{eq:DeltaprimitiveII}) gives the jump of the index
towards the stable chamber, the contribution
$\Omega_{T_{12}}(\Gamma;t)$ of $T_{12}$ to the total index, is given by Eq. (\ref{eq:DeltaprimitiveII}) with the moduli at the right hand side
at the corresponding attractor points. One finds for the generating function
\begin{eqnarray}
h_{T_{12},Q-\frac{1}{2}P}(\tau;t)&=&\sum_{Q_0} \Omega_{T_{12}} (\Gamma;t)\, q^{-Q_0+\frac{1}{2} Q^2} \non\\
&=&\sum_{Q_1+Q_2=Q}\textstyle{\frac{1}{2}}(\,\sgn(\mathcal{I}(\Gamma_1,\Gamma_2;t))-\sgn(I_{12}))\,)\,(-1)^{P_1\cdot Q_2-P_2\cdot
  Q_1} \nonumber  \\
&& \times \,(P_1\cdot
Q_2-P_2\cdot Q_1)\, 
q^{\frac{1}{2}Q^2-\frac{1}{2}(Q_1)^2_1-\frac{1}{2}(Q_2)^2_2} \non \\
&&\times \,h_{P_1, \mu_1}(\tau)\,h_{P_2,\mu_2}(\tau),\non
\end{eqnarray}
where $Q^2$ and $(Q_i)_i^2$ are the quadratic forms based on $P$ and
$P_i$ respectively. $h_{T_{12},Q-\frac{1}{2}P}(\tau;t)$ is not a
vector-valued modular form; however Ref. \cite{Manschot:2009ia}
continues by showing that summing over the D2-brane charges, leads to the
partition function 
\be
\label{eq:ZP1P2}
\mathcal{Z}_{T_{12}}(\tau,C,t)=\sum_{(\mu_{1},\mu_{2})\in \Lambda_{1}^*/\Lambda_1
  \oplus \Lambda_{2}^*/\Lambda_2}
\overline{h_{P_1,\mu_1}(\tau)}\,\overline{h_{P_2,\mu_2}(\tau)}\,\Psi_{(\mu_{1},\mu_2)}(\tau,C,B), 
\ee
with
\begin{eqnarray}
\label{eq:Psi2}
&&\Psi_{(\mu_{1},\mu_2)}(\tau,C,B)=\sum_{{Q_1\in \Lambda_1+\mu_1+P_1/2 \atop
      Q_2\in \Lambda_2+\mu_2+P_2/2}} \, S(T_{12},t)\,I_{12}\,
  (-1)^{P_1\cdot Q_1+P_2\cdot Q_2-1}  \\
&& \quad \times e\left(\tau (Q-B)_{+}^2/2+ \bar \tau (\sum_{i=1,2}(Q_i-B)_{i}^2-(Q-B)_{+}^2)/2+C\cdot (Q-B/2)\right). \non 
\end{eqnarray}
$\Psi_{(\mu_{1},\mu_2)}(\tau,C,B)$ determines which charge
combinations are stable and which are not. It does not transform as a
theta function, but using techniques of indefinite theta functions
\cite{Zwegers:2000}, one can complete it to a function $\Psi_{(\mu_{1},\mu_2)}^*(\tau,C,B)$ which does
transform as a theta function with weight $(\half, b_2+\half)$. We
therefore call $\Psi_{(\mu_{1},\mu_2)}(\tau,C,B)$ a mock Siegel theta
function. Using the completed function, $\mathcal{Z}_{T_{12}}(\tau,C,t)$ transforms precisely as
$\mathcal{Z}_{T_{1+2}}(\tau,C,t)$ (with $T_{1+2}$ the $N=1$ flow tree
with magnetic charge $P_1+P_2$). An intriguing phenomenon of the modular
completion is that it replaces the discontinuity of the partition function
across walls by a continuous transition. One could say that the
discontinuous invariants $\Omega(\Gamma;t)$ are replaced by functions
$\Omega(\Gamma;t,\tau_2)$ of $t$ and $\tau_2$, which approach the 
original invariants in the limit $\tau_2\to \infty$. If this 
structure is valid in general, taking the limit and crossing a wall
between $\CC_\mathrm{A}$ and $\CC_\mathrm{B}$, 
leads to the following commutative diagram: 
\be
\non
\xymatrixcolsep{5pc}
\xymatrix{\Omega(\Gamma;t_\mathrm{A},\tau_2) \ar[d]^{\tau_2\to\infty} \ar[r]^{t_\mathrm{A}\to t_\mathrm{B}} &\Omega(\Gamma;t_\mathrm{B},\tau_2) \ar[d]^{\tau_2\to\infty}\\
\Omega(\Gamma;t_\mathrm{A}) \ar[r]^{\mathrm{KS}} & \Omega(\Gamma;t_\mathrm{B}) }\
\ee
For a better understanding of the way
$\Psi_{(\mu_{1},\mu_2)}(\tau,C,B)$ determines which states are stable and which not, we explain
briefly the concept of indefinite theta functions.

\subsubsection*{Indefinite theta function}
\vspace{-.2cm}
An indefinite theta function sums over part of 
an indefinite lattice, which belongs either to the positive or negative definite part of the
lattice. Typically such sums do not transform as modular forms, but can
be made so in special cases by the addition of a non-holomorphic term
\cite{Zwegers:2000}. The idea is most easily explained by considering a
lattice $\Lambda$ with signature $(1,b_2-1)$ \cite{Gottsche:1996,
  Zwegers:2000}. 

Given two positive vectors $J,\mathcal{P}\in \Lambda$ with $J\cdot
\mathcal{P}>0$, one can proof that the condition $\half(\sgn(J \cdot
Q)-\sgn(\mathcal{P}\cdot Q))\neq 0$ implies that $Q^2<0$. This proof
is completely analogous to the proof of Proposition \ref{prop:2}; just
omit the term with $Q\cdot \CJ_2$ and identify $\CP, J$ with $\CP_{12}$
and $s_{12}$. Figure \ref{fig:indeftheta} displays the lattice points for which the
condition is satisfied for a 2-dimensional lattice with quadratic form
$\left( \begin{array}{ll} -1 & 0 \\ 0 & 1 \end{array}\right)$ (which
is incidentally the intersection form of 2-cycles on $\mathbb{CP}^2$ blown up at a point).
\begin{figure}[h!]
\centering
\includegraphics[totalheight=10cm]{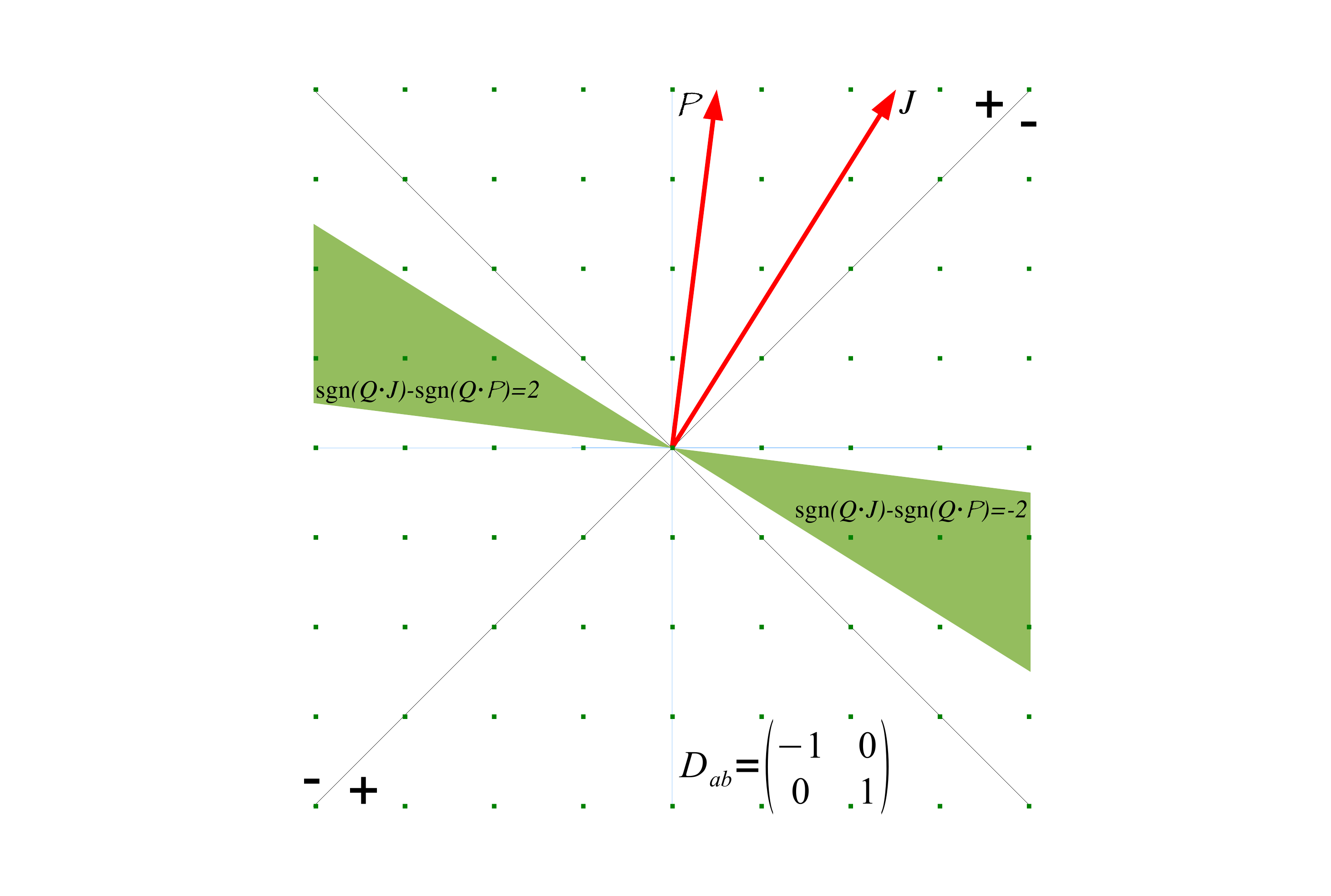}
\caption{An indefinite lattice; the lattice points inside the green
  region contribute to the theta function defined in the text.}
\label{fig:indeftheta}
\end{figure}
The green region in the figure contains the lattice points for which
the condition is satisfied. This region changes when $J$ and/or
$\mathcal{P}$ are varied. (From the point of view of wall-crossing, we
think of $\mathcal{P}$ as fixed and $J$ as variable.)

The indefinite theta function is defined as the sum over all lattice points,
satisfying the condition:
\be
\label{eq:indeftheta}
\theta_\mu(\bar \tau,z)=\sum_{k\in \Lambda} \half(\sgn(J \cdot
Q))-\sgn(\mathcal{P}\cdot Q)) \,\bar q^{k^2/2} y^k,
\ee
which is convergent. Its Fourier coefficients are locally constant as function of $J$, but
can change if the boundary of the green region passes a lattice point. These indefinite theta
functions do not have the nice modular properties which holomorphic theta
functions or Siegel theta functions are known to have. However, the
indefinite theta function can be completed to a function with the
familiar modular properties, by replacing $\sgn(x)$ in (\ref{eq:indeftheta}) by $E(x\sqrt{\tau_2})$
with $E(z)=2\int_0^ze^{-\pi u^2}du$ \cite{Zwegers:2000}. Note that the discontinuous function
$\sgn(Q\cdot J)$ as function of $J$ is replaced now by a continuous function. Moreover,
$E(x\sqrt{\tau_2})$ approaches $\sgn(x)$ for $\tau_2\to \infty$, the
``thickness of the step'' is of order of $\sqrt{2/\tau_2}$. 

The function $\Psi_{(\mu_{1},\mu_2)}(\tau,C,B)$ is very similar to
the function (\ref{eq:indeftheta}). An important difference is that 
the boundary of the positive definite cone depends on the moduli by
$Q\cdot \CJ_2$ in Eq. (\ref{eq:QJ2}). Another
difference is that $\Psi_{(\mu_{1},\mu_2)}(\tau,C,B)$ contains the
factor $P_1\cdot Q_2-P_2\cdot Q_1$ multiplying the exponential,
which leads to a more complicated modular completion.

\subsubsection*{Entropy enigma}
\vspace{-0.2cm}
One can easily compare the relative magnitude of the contribution to the index of flow trees with $N=1$ and
2 using the partition function (\ref{eq:ZP1P2}). A special class is formed by flow trees with $N>1$ whose index exceeds the
index of the flow tree with $N=1$, the so called entropy enigmas. We
consider here entropy enigmas in the Cardy regime of the CFT where
$\hat Q_{\bar 0}\gg P^3$. Ref. \cite{Andriyash:2008it} showed
earlier the existence of entropy enigmas  for D4-D2-D0 branes for weak
topological string coupling $g_{\mathrm{top}}\sim\sqrt{\hat Q_{\bar 0}/P^3}$. The
entropy of the single center is in the Cardy regime:
\be
\pi \sqrt{\frac{2}{3}(P^3+c_2\cdot P)\left(Q_{\bar
      0}+\textstyle{\frac{1}{2}}Q^2\right)}.
\ee
Application of the Cardy formula to Eq. (\ref{eq:ZP1P2}) shows that the
condition for enigmatic $N=2$ flow trees is:
\be
(P^3+c_2\cdot P)\left(Q_{\bar 0}+\textstyle{\frac{1}{2}}Q^2\right)<(P_1^3+P_2^3+c_2\cdot P)\left(Q_{\bar 0}+\textstyle{\frac{1}{2}}(Q_1)_1^2+\textstyle{\frac{1}{2}}(Q_2)_2^2\right).
\ee
Note that the right hand side also captures the entropy due to
distributing the total D0-brane charge in different ways between the two
endpoints, otherwise one should just add up the entropy of both endpoints.

Charges $\Gamma_1$ and $\Gamma_2$, which satisfy this relation, are
not hard to find. To this end, write $Q$ as $\mu -P/2+
k$ with $\mu \in \Lambda^*/\Lambda$ and $k\in \Lambda$. Choose $Q$
such that $k^2=P_1\cdot k^2+P_2\cdot k^2=0$. Therefore, $P_1\cdot k^2=-P_2\cdot
k^2$. Without loss of generality we can assume that $P_1\cdot k^2\geq
0$. Taking $Q_2=0$ leads now to an enigmatic configuration for
sufficiently large $k$. It is not difficult to see that this can very well happen for strong
topological string coupling $g_{\mathrm{top}}\sim\sqrt{\hat Q_{\bar
    0}/P^3}\gg 1$. Substituting this choice of charges into the
stability condition shows that there exist regions in the moduli
space where such bound states are stable. These enigmas show that one has
to be careful by estimating the magnitude of the total index by the
CFT index away from the attractor point.

\subsubsection{Three endpoints}
\label{subsubsec:N3}
This subsection discusses flow trees with three endpoints with
D4-D2-D0 charges. We will proof that also in this case the claim (\ref{eq:claim}) is true, such that the
partition function for flow trees with $N=3$ is convergent. 
The total lattice is now a sum of three lattices:
$\Lambda_1\oplus \Lambda_2\oplus \Lambda_3$. 
The case $N=3$ is qualitatively different from $N=2$,
since the flow of the moduli needs to be taken into account. 
What we want to proof is:
\be
 S(T_{(12)3},t)\neq 0\qquad \Longrightarrow \qquad (Q-B)_+^2 - \sum_{i=1}^3
\frac{1}{2}(Q_i-BP_i)_{i}^2 \geq 0,
\ee
with $S(T_{(12)3},t)$ given by Eqs. (\ref{eq:conditionS}) and (\ref{eq:imZZ1}). 

The requirement that the stability of the subtree $(12)$ is determined
in terms of $t_1$ instead of $t$ has the consequence that the
stability condition is not directly related to a determinant like
Eq. (\ref{eq:determinant}). Therefore, we will reduce $S(T_{(12)3},t)\neq 0$ to special cases where an argument based
on a determinant can be used. To this end, define for generic flow
trees the ``unphysical''  condition:
\begin{eqnarray}
\mathrm{\bf Condition\,\, U}:&& U(T,t)=\prod_{v\in V}\half
\left(\sgn(\left<\Gamma_{vL},\Gamma_{vR}\right>)-\sgn(\mathcal{I}(\Gamma_{vL},\Gamma_{vR},t))\right)\neq
0.\non
\end{eqnarray}
Note that the non-vanishing of $U(T,t)$ is determined here by the
stability of all splits at $v\in V$ in terms of $t$. If stability would
be based on this condition, the jumps of the index might appear at other
points in the moduli space than the walls of marginal stability for
the total charge. It is however a useful condition since:
\begin{proposition}
\label{prop:3}
\be
\label{eq:prop3}
U(T,t)\neq 0\qquad \Longrightarrow \qquad (Q-B)_+^2 - \sum_{i=1}^N
\frac{1}{2}(Q_i-BP_i)_{i}^2 \geq 0,
\ee
\end{proposition}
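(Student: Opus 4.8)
The plan is to reduce the statement for a general tree $T$ with $N$ endpoints to a finite intersection (conjunction) of the two-endpoint-type inequalities already established in Proposition \ref{prop:2}, using the product structure of $U(T,t)$. Note first that $U(T,t)\neq 0$ forces, at \emph{every} trivalent vertex $v\in V$, that $\sgn(\left<\Gamma_{vL},\Gamma_{vR}\right>)-\sgn(\mathcal{I}(\Gamma_{vL},\Gamma_{vR},t))\neq 0$, i.e. each individual factor is nonzero. This is exactly the hypothesis of Proposition \ref{prop:2} applied at the vertex $v$, but for the \emph{pair} of aggregate charges $(\Gamma_{vL},\Gamma_{vR})$ rather than for two endpoints. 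So the core idea is: each vertex $v$ contributes an inequality of the form ``$(\Gamma_{vL})^2_{P_{vL}} + (\Gamma_{vR})^2_{P_{vR}} - (\text{projection to }J)^2 < 0$'', and I want to telescope these $N-1$ vertex inequalities into the single global inequality \eqref{eq:prop3}.

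First I would set up notation: for each vertex $v$, write $P_{vL}$, $P_{vR}$ for the aggregate magnetic charges flowing left and right, and $Q_{vL}=\sum_{i\in \text{leaves}(vL)}(Q_i-BP_i)$, similarly $Q_{vR}$ (having already absorbed the $B$-shifts as in the $N=2$ case, replacing $Q_i-BP_i$ by $Q_i$). Proposition \ref{prop:2}, applied with the lattice $\Lambda_{P_{vL}}\oplus\Lambda_{P_{vR}}$, the vectors $\mathcal{J}$, $\mathcal{P}$, $s$ built from $P_{vL},P_{vR},J$ as in \eqref{eq:vectors}, and with $Q=(Q_{vL},Q_{vR})$, then yields
\be
(Q_{vL})^2_{P_{vL}} + (Q_{vR})^2_{P_{vR}} - \frac{(Q_v\cdot J)^2}{P_v\cdot J^2} < 0, \non
\ee
where $Q_v=Q_{vL}+Q_{vR}$ and $P_v=P_{vL}+P_{vR}$, i.e. the positive-definite ($J$-projected) norm of the aggregate charge at $v$ strictly exceeds the sum of the two ``child'' positive-definite norms $(Q_{vL})^2_{+,P_{vL}}$ and $(Q_{vR})^2_{+,P_{vR}}$ — here I use that $(Q_{vL})^2_{P_{vL}} = (Q_{vL})^2_{+,P_{vL}} - (Q_{vL})^2_{-,P_{vL}}$ splits into the definite pieces, and the $J$-projection term is exactly $(Q_v)^2_{+,P_v}$. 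Rearranged, the vertex inequality says
\be
(Q_v)^2_{+,P_v} > (Q_{vL})^2_{+,P_{vL}} + (Q_{vR})^2_{+,P_{vR}} - (Q_{vL})^2_{-,P_{vL}} - (Q_{vR})^2_{-,P_{vR}}, \non
\ee
but the cleaner formulation — and the one I would actually telescope — is: at each $v$,
\be
(Q_v)^2_{+,P_v} \;\geq\; (Q_{vL})^2_{+,P_{vL}} + (Q_{vR})^2_{+,P_{vR}} + \big[(Q_{vL})^2_{-,P_{vL}} + (Q_{vR})^2_{-,P_{vR}} - (Q_v)^2_{-,P_v}\big] \cdot 0, \non
\ee
so let me instead phrase it as what I genuinely need: since the negative-definite norm is subadditive-with-a-twist along the tree (it is \emph{not} simply additive because the lattices change), the honest route is to induct on the tree.

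The induction: for a leaf $i$, \eqref{eq:prop3} with $N=1$ is the trivial positivity \eqref{eq:mass2}, $(Q_i)^2_{+,P_i} - \tfrac12 (Q_i)^2_{P_i} = \tfrac12(Q_i)^2_{+,P_i} + \tfrac12 (Q_i)^2_{-,P_i}\geq 0$. For the inductive step at the root vertex $v_0$ with children carrying aggregate charges $(\Gamma_L,\Gamma_R)$: by induction the left subtree satisfies $(Q_L)^2_{+,P_L} \geq \sum_{i\in L}\tfrac12 (Q_i)^2_{P_i}$ and likewise on the right; adding these and using the vertex inequality at $v_0$, namely $(Q)^2_{+,P} \geq (Q_L)^2_{+,P_L} + (Q_R)^2_{+,P_R}$ (which is \emph{weaker} than the strict Proposition-\ref{prop:2} conclusion and follows from it — discarding the nonnegative $-(Q_L)^2_{-}-(Q_R)^2_{-}$ terms is in the favorable direction since those are $\leq 0$, wait, $(Q)^2_-$ is the negative-definite norm so $(Q_L)^2_{-,P_L}\leq 0$; one must check the sign bookkeeping here carefully). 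Putting the pieces together gives $(Q)^2_{+,P}\geq \sum_{i=1}^N \tfrac12 (Q_i)^2_{P_i}$, which is exactly \eqref{eq:prop3}. I expect the \textbf{main obstacle} to be precisely this sign bookkeeping across the changing lattices: $(Q_L)^2_{P_L}$ is computed with quadratic form $D_{L}=d_{abc}P_L^c$ while the child norms use $D_i = d_{abc}P_i^c$, and the positive/negative splitting with respect to $J$ must be shown to behave additively enough. The clean way around it is to observe that for \emph{any} positive magnetic charge $P'$, the $J$-projected norm is $(Q')^2_{+,P'} = (Q'\cdot J)^2/(P'\cdot J^2)$, which depends on $P'$ only through the scalar $P'\cdot J^2$ in the denominator; so the vertex inequality from Proposition \ref{prop:2} literally \emph{is} the statement $(Q_v\cdot J)^2/(P_v\cdot J^2) \geq (Q_{vL})^2_{P_{vL}} + (Q_{vR})^2_{P_{vR}}$ after moving terms, and one telescopes these scalar inequalities directly down the tree without ever needing to decompose the intermediate $(Q_v)^2_{P_v}$ into definite pieces — only the leaf norms $(Q_i)^2_{P_i}$ appear on the right at the end, which is what \eqref{eq:prop3} asks for. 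I would therefore write the proof as: (i) reduce to $B=0$; (ii) quote Proposition \ref{prop:2} at each $v\in V$ in the scalar form above; (iii) induct on $T$ from the leaves up, at each step adding the two child inequalities and chaining with the vertex inequality, with the base case \eqref{eq:mass2}.
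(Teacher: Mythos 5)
Your reduction of $U(T,t)\neq 0$ to vertex-by-vertex hypotheses of Proposition \ref{prop:2} is fine, but the telescoping step that is supposed to assemble the vertex inequalities into \eqref{eq:prop3} does not go through, and the sign bookkeeping you flag as ``the main obstacle'' is in fact fatal to both versions of your chaining. The inequality you ultimately want at each vertex, $(Q_v)^2_{+,P_v}\geq (Q_{vL})^2_{+,P_{vL}}+(Q_{vR})^2_{+,P_{vR}}$, is not ``weaker than Proposition \ref{prop:2}'' --- it is false except when $s_{vLR}\cdot Q=0$: writing $a=Q_{vL}\cdot J$, $b=Q_{vR}\cdot J$, $p=P_{vL}\cdot J^2$, $q=P_{vR}\cdot J^2$, one has the exact identity
\be
\frac{(a+b)^2}{p+q}=\frac{a^2}{p}+\frac{b^2}{q}-\bigl(s_{vLR}\cdot(Q_{vL},Q_{vR})\bigr)^2, \non
\ee
so the $J$-projected norm is \emph{sub}additive along the tree, with deficit $(s\cdot Q)^2$ --- exactly opposite to what your telescoping needs. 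Your ``clean'' scalar version fares no better: Proposition \ref{prop:2} at the root gives $(Q)^2_{+,P}\geq (Q_{L})^2_{P_L}+(Q_R)^2_{P_R}$ with the \emph{full indefinite} norms of the aggregate charges on the right, while the subtree induction controls only $(Q_L)^2_{+,P_L}$; since $(Q_L)^2_{P_L}=(Q_L)^2_{+,P_L}+(Q_L)^2_{-,P_L}\leq (Q_L)^2_{+,P_L}$, you are trying to chain $X\geq Y$, $Y\leq Z$, $Z\geq W$ to get $X\geq W$, and the uncontrolled negative-definite piece $(Q_L)^2_{-,P_L}$ of the aggregate charge (measured in the aggregate lattice with quadratic form $d_{abc}P_L^c$, which appears in neither the hypothesis nor the conclusion) breaks the chain at the first internal vertex. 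So the claim that ``only the leaf norms appear on the right at the end'' is not delivered by the stated inequalities for any tree with an internal non-root vertex.

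The paper closes precisely this gap with an ingredient your proposal lacks. It writes the global quantity exactly as $A_L+A_R-(s_{1LR}\cdot Q)^2$, with $A_{L}=(Q_{L})^2_{+,P_L}-\sum_{i\in L}(Q_i)^2_i$ and similarly $A_R$, uses the induction hypothesis only to conclude $A_L,A_R\geq 0$, and then disposes of the leftover deficit $(s_{1LR}\cdot Q)^2$ by a separate $4\times 4$ Gram-determinant/signature argument for $Q,\CJ,\CP_{1LR},s_{1LR}$; it is there that the sign condition at the \emph{root} vertex, $(\CP_{1LR}\cdot Q)(s_{1LR}\cdot Q)\leq 0$, is actually consumed, exactly as in the proof of Proposition \ref{prop:2}. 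In your argument the root sign condition is only used to invoke Proposition \ref{prop:2} once more, which cannot compensate the accumulated $(s\cdot Q)^2$ deficits. To repair your proof you would have to either establish the strengthened inductive statement $(Q_v)^2_{P_v}\geq\sum_{i\ \mathrm{below}\ v}(Q_i)^2_i$ for the full norm (which does not close under the same induction, for the same reason) or reinstate the determinant step at the top vertex --- at which point you have reproduced the paper's argument.
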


\begin{proof} It is again sufficient to
  proof the proposition for $B=0$. The vectors defined in
  Eq. (\ref{eq:vectors}), are easily generalized to vectors for vertex
  1 in the tree $T$: $1\to 1L$, and $2\to 1R$. In terms of these vectors, Condition $\bf U$ becomes:
\be
U(T,t)=\prod_{v\in V} \half(\sgn(\mathcal{P}_{vLR}\cdot (Q_{vL},Q_{vR}))-\sgn(s_{vLR}\cdot (Q_{vL},Q_{vR})))\neq 0.
\ee
We will use induction to arrive at the desired result. The proposition is true for $N=2$ by
Proposition \ref{prop:2}. For general $N>2$, the attractor flow tree
can be seen as a combination of two trees $T_{1L}$ and $T_{1R}$
which merge at vertex 1. We index the endpoints of $T_{1L}$ and
$T_{1R}$ respectively by $i=1,2,\dots ,k$ and $i=k+1,\dots ,N$,
such that the left-hand side of the inequality in Eq. (\ref{eq:prop3}) is equal to 
\be
\label{eq:Q2JN}
\frac{(Q_{1L}\cdot
  J)^2}{P_{1L}\cdot J^2}-\sum_{i=1}^k (Q_i)_i^2+\frac{(Q_{1R}\cdot J)^2}{P_{1R}\cdot
  J^2}-\sum_{i=k+1}^N
(Q_i)_i^2-(s_{1LR}\cdot (Q_{1L},Q_{1R}))^2.
\ee
The product $U(T,t)$ factorizes as
\be
U(T,t)=\half (\sgn(\mathcal{P}_{1LR}\cdot
(Q_{1L},Q_{1R}))-\sgn(s_{1LR}\cdot
(Q_{1L},Q_{1R})))\,S(T_L,t)\,S(T_R,t).
\ee
By the induction hypothesis, the sum of the first two terms is positive if 
$S(T_L,t)$ is non-zero, and the similarly the sum of the second two if $S(T_R,t)$ is non-zero. Therefore one
can argue analogously to the proof of Proposition
\ref{prop:2} that $(Q_{1},Q_{2}.\dots,Q_N)$, $\mathcal{J}_{2}$, $\mathcal{P}_{1LR}$ and
$s_{1LR}$ span a space of signature $(2,2)$ in
$\Lambda_1\oplus\Lambda_2\oplus \dots
\oplus \Lambda_N$. Eq. (\ref{eq:Q2JN}) is therefore
negative if $U(T) \neq 0$. 
\end{proof}

For a tree with $N=3$, $S(T_{(12)3},t)\neq 0$  implies in most
cases that $U(T,t)\neq 0$, with $T$ one of the three trees with
$N=3$. Specifically,  $S(T_{(12)3},t)\neq 0$  together with 
\be
\label{eq:s12} I_{12}\left(I_{2(31)}\,P_1\cdot
  J^2+I_{(23)1}\,P_2\cdot J^2\right)\leq 0,
\ee
implies $U(T_{12},t)\neq 0$, and consequently $U(T_{(12)3},t)\neq 0$. To 
analyze the remaining cases, we divide them into three classes:
\begin{eqnarray}
\label{eq:condI}
&{\bf I}&: I_{12}\,I_{31}>0 \quad  \mathrm{and} \quad
I_{12}\,I_{23}<0, \non \\
&{\bf II}&: I_{12}\,I_{31}<0 \quad  \mathrm{and} \quad
I_{12}\,I_{23}>0, \\
&{\bf III}&: I_{12}\,I_{31}>0 \quad  \mathrm{and} \quad
I_{12}\,I_{23}>0. \non
\end{eqnarray}
To proof the positivity for these classes, we only need
to be concerned with those trees for which $S(T_{12},t)=0$ and
$S(T_{(12)3},t)\neq 0$. Then it is possible to show that $\bf I$ implies
$U(T_{2(31)},t)\neq 0$; and similarly that $\bf 
II$ implies $U(T_{1(23)},t)\neq 0$. Class $\bf III$ cannot be
reduced to $U(T,t)\neq 0$ for some $T$, and the proof requires a little more work.  

Let $P=P_1+P_2+P_3$ and define the following unit vectors:
\begin{eqnarray}
\label{eq:vecIII}
&&\mathcal{P}_{12}=\frac{(-P_2,P_1,0)}{\sqrt{(P_1+P_2)P_1P_2}}, \qquad
\mathcal{P}_{23}=\frac{(0,-P_3,P_2)}{\sqrt{(P_2+P_3)P_2P_3}},\non \\
&&\mathcal{P}_{31}=\frac{(P_3,0,-P_1)}{\sqrt{(P_1+P_3)P_1P_3}},\qquad
\mathcal{P}_{(12)3}=\frac{(-P_3,-P_3,P_1+P_2)}{\sqrt{P(P_1+P_2)P_3}},\\
&&s_{(12)3}=\frac{(-P_3\cdot J^2\,J,-P_3\cdot J^2\,J,(P_1+P_2)\cdot
  J^2\,J)}{\sqrt{P\cdot J^2\,(P_1+P_2)\cdot J^2\,P_3\cdot
    J^2}}, \non\\
&&\mathcal{J}_3=\frac{(J,J,J)}{\sqrt{P\cdot J^2}}.\non
\end{eqnarray}
Analogously to Proposition \ref{prop:1}, one
can show various useful relations between these vectors. The innerproduct of
$\CJ_3$ with any other vector in (\ref{eq:vecIII})
vanishes. Furthermore,
\be
\CP_{12}\cdot s_{(12)3}=\CP_{12}\cdot \CP_{(12)3}=0,\quad s_{(12)3}\cdot \mathcal{P}_{(12)3}>1.
\ee

\begin{proposition}
\label{prop:4}
Let $Q=(Q_1,Q_2,Q_3)\in \Lambda_1^*\oplus \Lambda_2^*\oplus
\Lambda_3^*$. If the following conditions are satisfied
\be
\label{eq:condIII}
\begin{array}{l}
a)\quad  (s_{(12)3}\cdot Q)\,(\mathcal{P}_{(12)3}\cdot Q)\geq 0, \\
b)\quad (\mathcal{P}_{12}\cdot Q)\, (\CP_{31}\cdot Q)\geq 0, \\
c) \quad (\mathcal{P}_{12}\cdot Q)\, (\CP_{23}\cdot Q)\geq 0,  \\
\end{array}
\ee
then
\be
\label{eq:QJi3}
\sum_{i=1}^3 (Q_i)_i^2-(Q\cdot \mathcal{J}_3)^2<0.
\ee 
Condition {\it a}) is equivalent to the stability condition for the two
center split $(1+2)3$; Conditions {\it b}) and {\it c}) are equivalent to
Condition $\bf III$ in Eq. (\ref{eq:condI}).
\end{proposition}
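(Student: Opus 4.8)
The plan is to run the same Gram-determinant/signature argument used in Propositions~\ref{prop:1} and~\ref{prop:2}, now inside the lattice $\Lambda_1\oplus\Lambda_2\oplus\Lambda_3$ of signature $(3,3b_2-3)$. First I would reduce to $B=0$: since the $B$-dependence enters only through the shift $Q_i\mapsto Q_i-BP_i$ (and $Q\mapsto Q-BP$), replacing $Q_i$ by $Q_i-BP_i$ leaves the vectors of (\ref{eq:vecIII}) and the inequality (\ref{eq:QJi3}) in exactly the same form. After this, using $Q^2=\sum_i (Q_i)_i^2$ the target becomes simply $Q^2-(Q\cdot\mathcal{J}_3)^2<0$, i.e.\ the projection of $Q$ orthogonal to $\mathcal{J}_3$ must be negative.

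Next I would assemble the relations among the vectors of (\ref{eq:vecIII}), in analogy with Proposition~\ref{prop:1}: $\mathcal{J}_3$ has unit norm and is orthogonal to every other vector; $\mathcal{P}_{12}$ is orthogonal to both $\mathcal{P}_{(12)3}$ and $s_{(12)3}$; and $\mathcal{J}_3,\mathcal{P}_{(12)3},s_{(12)3}$ span a sub-lattice of signature $(2,1)$, so $s_{(12)3}\cdot\mathcal{P}_{(12)3}>1$. I would also record the identity $\mathcal{P}_{(12)3}\propto (0,-P_3,P_2)-(P_3,0,-P_1)$, which expresses $Q\cdot\mathcal{P}_{(12)3}$ as a positively-weighted difference of $Q\cdot\mathcal{P}_{23}$ and $Q\cdot\mathcal{P}_{31}$, together with the (negative) inner products $\mathcal{P}_{12}\cdot\mathcal{P}_{23}$ and $\mathcal{P}_{12}\cdot\mathcal{P}_{31}$. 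These are precisely what will tie Conditions \emph{b}) and \emph{c}) back to the geometry of the lattice.

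The core step is a Gram-determinant inequality. Because the ambient form has only three positive directions, and $\mathcal{J}_3$ uses one while the orthogonal pair $(\mathcal{P}_{(12)3},s_{(12)3})$ contains a second, the span of $Q$ together with $\mathcal{J}_3,\mathcal{P}_{12},\mathcal{P}_{(12)3},s_{(12)3}$ has signature at most $(3,2)$, hence its $5\times5$ Gram determinant is $\geq 0$ (and $=0$ when $Q$ is dependent on the others). Expanding it through the block structure gives
\[
Q^2-(Q\cdot\mathcal{J}_3)^2\;\le\;(Q\cdot\mathcal{P}_{12})^2+\frac{(Q\cdot\mathcal{P}_{(12)3})^2+(Q\cdot s_{(12)3})^2-2\,(s_{(12)3}\cdot\mathcal{P}_{(12)3})\,(Q\cdot\mathcal{P}_{(12)3})(Q\cdot s_{(12)3})}{1-(s_{(12)3}\cdot\mathcal{P}_{(12)3})^2}.
\]
Condition \emph{a}) forces the mixed term in the numerator to have a definite sign, and it remains to show the right-hand side is strictly negative. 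This is where Conditions \emph{b}) and \emph{c}) enter: writing $Q\cdot\mathcal{P}_{(12)3}$ via $Q\cdot\mathcal{P}_{23},Q\cdot\mathcal{P}_{31}$ and using that \emph{b}), \emph{c}) make $Q\cdot\mathcal{P}_{12},Q\cdot\mathcal{P}_{23},Q\cdot\mathcal{P}_{31}$ share a common sign, one controls the $(Q\cdot\mathcal{P}_{12})^2$ term; equivalently one enlarges the frame to include $\mathcal{P}_{23}$ (or $\mathcal{P}_{31}$) and repeats the same signature/determinant bookkeeping directly. For Classes $\mathbf{I}$ and $\mathbf{II}$ this refinement is not needed, since there $S(T_{(12)3},t)\neq0$ already implies $U(T,t)\neq0$ for one of the three trees and Proposition~\ref{prop:3} applies; only Class $\mathbf{III}$ needs the argument above.

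The main obstacle — and the reason Class $\mathbf{III}$ is singled out — is that, unlike the two-endpoint case, the stability of the inner split $(1,2)$ is read off at the flow point $t_1$ rather than at $t$, so $S(T_{(12)3},t)\neq0$ is \emph{not} literally the non-negativity of one Gram determinant (cf.\ (\ref{eq:imZZ1})). Forcing it into determinant form is exactly what produces the case split (\ref{eq:condI}) and the auxiliary hypotheses \emph{b}), \emph{c}). The delicate part of the proof is checking, case by case, that the sign information actually available is enough to kill the potentially positive $(Q\cdot\mathcal{P}_{12})^2$-type contribution on the right-hand side above; the determinant/signature manipulations themselves are a routine extension of Propositions~\ref{prop:1} and~\ref{prop:2}.
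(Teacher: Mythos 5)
Your plan is essentially the paper's own proof: the same $5\times5$ Gram determinant over the frame $\{Q,\mathcal{J}_3,\mathcal{P}_{12},s_{(12)3},\mathcal{P}_{(12)3}\}$ giving $Q^2-(Q\cdot\mathcal{J}_3)^2-(Q\cdot\mathcal{P}_{12})^2<0$ under condition \emph{a}), followed by an enlargement of the frame using $\mathcal{P}_{23}$/$\mathcal{P}_{31}$ to remove the $(Q\cdot\mathcal{P}_{12})^2$ term via conditions \emph{b}), \emph{c}). The one device you leave implicit is made explicit in the paper: one works with $\mathcal{P}_{23\bot}=\mathcal{P}_{23}-(\mathcal{P}_{23}\cdot\mathcal{P}_{(12)3})\mathcal{P}_{(12)3}$, which is a positive combination of $\mathcal{P}_{23}$ and $\mathcal{P}_{31}$ (so \emph{b}), \emph{c}) give $(\mathcal{P}_{12}\cdot Q)(\mathcal{P}_{23\bot}\cdot Q)>0$ with $\mathcal{P}_{12}\cdot\mathcal{P}_{23\bot}<0$), and which together with $\mathcal{P}_{12}$ spans a $(1,1)$ subspace orthogonal to both positive directions $\mathcal{J}_3,\mathcal{P}_{(12)3}$, so a single further $4\times4$ determinant closes the argument without any case-by-case check.
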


\begin{proof}
We start by showing an implication of condition {\it a}) in
(\ref{eq:condIII}). The positive definite subspace of $\Lambda$ is
spanned by the orthonormal basis given by $\CJ$, $\CP_{12}$ and
$\CP_{(12)3}$. Consequently, the vectors $Q$, $s_{(12)3}$, $\CJ$, $\CP_{12}$ and
$\CP_{(12)3}$ span generically a space of signature 
$(3,2)$. Therefore,
\be
\left|\begin{array}{ccccc} Q^2 & Q\cdot \mathcal{J}_3  & Q\cdot
    \mathcal{P}_{12}  & Q\cdot s_{(12)3} & Q\cdot \CP_{(12)3}  \\
Q\cdot \mathcal{J}_3  & 1 & 0 & 0 & 0\\
Q\cdot \mathcal{P}_{12}  & 0 & 1 & 0 & 0\\
Q\cdot s_{(12)3}  & 0 & 0& 1& \CP_{(12)3}\cdot s_{(12)3}
\\
Q\cdot \CP_{(12)3} & 0 & 0& \CP_{(12)3}\cdot s_{(12)3} & 1\end{array}\right|>0. \non
\ee
From this determinant follows that 
\be
Q^2-(Q\cdot \CJ_3)^2-(Q\cdot \CP_{12})^2<0,
\ee
if condition {\it a}) in (\ref{eq:condIII}) is satisfied. Therefore $Q$, $\CJ_3$ and $\CP_{12}$ span in this case a space with
signature $(2,1)$. We want to show that conditions {\it b}) and {\it c}) imply that ``$-(Q\cdot \CP_{12})^2$'' can be
omitted from the inequality. To this end, we choose to complement the
set of three vectors $Q$, $\CJ_3$ and $\CP_{12}$ by 
\be
\CP_{23\bot}=\CP_{23}-(\CP_{23} \cdot \CP_{(12)3})\,\CP_{(12)3},
\ee
which is the component of $\CP_{23}$ orthogonal to $\CP_{(12)3}$. As a
result,  $Q$, $\CJ_3$, $\CP_{12}$ and $\CP_{23\bot}$ span a space of
signature $(2,2)$. Since $\CP_{12}$ and $\CP_{23\bot}$ are both
orthogonal  to $\CJ_3$ and $\CP_{(12)3}$, they span  
a space of signature $(1,1)$. Conditions {\it b}) and {\it c}) imply that $(\CP_{12}\cdot
Q)\,(\CP_{23\bot}\cdot Q)>0$, since 
\be
\CP_{23\bot}=\frac{1}{PP_3(P_1+P_2)}\left( PP_1P_3\,\CP_{23}+PP_2P_3\sqrt{\frac{P_1P_3(P_1+P_3)}{(P_2+P_3)P_2P_3}}\,\CP_{31}\right).
\ee
This also shows that $\CP_{12}\cdot\CP_{23\bot}<0$. Using these
relations together with the argument of the sign of the determinant:
\be
\left|\begin{array}{cccc} Q^2 & Q\cdot \mathcal{J}_3  & Q\cdot
    \mathcal{P}_{12}  & Q\cdot \mathcal{P}_{23\bot}  \\
Q\cdot \mathcal{J}_3  & 1 & 0 & 0\\
Q\cdot \mathcal{P}_{12}  & 0 & 1 & \CP_{12}\cdot \CP_{23\bot} \\
Q\cdot \mathcal{P}_{23\bot}  & 0 & \CP_{12}\cdot \CP_{23\bot} & \CP_{23\bot}^2 
\end{array}\right|>0, \non
\ee
one obtains the desired result
\be
Q^2-(Q\cdot \CJ_3)^2<0.
\ee 
\end{proof}

This proof gives more confidence that positivity can
be proven for any $N$. It is conceivable that for any $N$, $S(T,t)\neq 0$ can be
reduced for most $T$ to $U(T')\neq 0$ for several $T'$, and that in the
remaining cases it can be proved as well. An obstacle for an easy
inductive proof, analogous to the one for $U(T,t)$, is the fact that stability of
subtrees at $v_0$ is not ensured by stability at $v_1$. The quadratic
form for $T_{12}$ is not even positive definite for $S(T_{(12)3})\neq 0$.

Proposition \ref{prop:4} implies that the lattice sum
\begin{eqnarray}
\label{eq:psi3}
&&\Psi_{(\mu_{1},\mu_2,\mu_3)}(\tau,C,B)=\sum_{{Q_1\in \Lambda_1+\mu_1+P_1/2 \atop
      {Q_2\in \Lambda_2+\mu_2+P_2/2 \atop Q_3\in
        \Lambda_3+\mu_3+P_3/2}}} \, S(T_{(12)3},t)\,I_{(12)3}\,I_{12}\,  (-1)^{P_1\cdot Q_1+P_2\cdot Q_2+P_3\cdot Q_3} \non\\
&&\quad \times e\left(\tau (Q-B)_{+}^2/2+ \bar \tau (\sum_{i=1}^3 (Q_i-B)_{i}^2-(Q-B)_{+}^2)/2+C\cdot (Q-B/2)\right),  
\end{eqnarray}
is convergent. Analogously to the discussion in Subsection
\ref{subsubsec:twoend}, this object does not transform as a modular
form. Since it is a lattice sum it is not unlikely that a modular
completion exists for this sum as for $N=2$. This is also expected from
$S$-duality. However, due to the complexity of $S(T_{(12)3},t)$, 
this does not seem as easy as straightforward.  If $S(T_{(12)3},t)$ is
replaced by $U(T_{(12)3},t)$ one can iterate the 
procedure in Ref. \cite{Manschot:2009ia}. We will not attempt to find
the modular completion of Eq. (\ref{eq:psi3}), but leave this for 
future research.  

Nevertheless, we can now write down the
contribution of flow trees with three endpoints to the partition function: 
\be
\label{eq:Z123}
\mathcal{Z}_{T_{(12)3}}(\tau,C,t)=\sum_{(\mu_{1},\mu_{2},\mu_3)\in \Lambda_{1}^*/\Lambda_1
  \oplus \Lambda_{2}^*/\Lambda_2\oplus \Lambda_{3}^*/\Lambda_3}
\overline{h_{P_1,\mu_1}(\tau)}\,\overline{h_{P_2,\mu_2}(\tau)}\,\overline{h_{P_3,\mu_3}(\tau)}\,\Psi_{(\mu_{1},\mu_2,\mu_3)}(\tau,C,B).
\ee
The other topologies of the tree can similarly be taken into
account. If the $P_i$ are primitive and different, the partition
functions for $N=1,2$ and $3$ capture correctly the total jumps of
the indices across walls. We would also like to include the case when
the $P_i$ are possibly equal. In that case one must use the
semi-primitive wall-crossing formula, we will come back to this point
in Subsection \ref{subsec:non-primitive}.

\subsubsection*{Numerical experiments}
\vspace{-.3cm}
Besides the analytical proof of the claim, it is instructive to carry
out numerical experiments to answer questions like: what portion of
the set of rooted trees is a flow tree for given $t$? or what is the overlap between
Conditions A and U. I have done numerical experiments with three Calabi-Yaus, with $b_2=2,3$ and 4. The
Calabi-Yau with $b_2=2$ is discussed in more detail in Ref.
\cite{Candelas:1993dm}, and $b_2=3,4$ in Ref. \cite{Klemm:2004km}.
The only relevant data for our purpose are the triple intersection numbers, 
which are listed in Table \ref{tab:intnumbers}.   

\begin{table}[h]
\caption{Non-zero intersection numbers of  Calabi-Yaus with $b_2=2$ \cite{Candelas:1993dm} and $b_2=3,4$ \cite{Klemm:2004km}. }
\label{tab:intnumbers}
\begin{center}
\begin{tabular}{c|p{2cm}|p{4cm}|p{5cm}}
$b_2$ & 2& 3 & 4 \\
\hline
$d_{abc}$ & $d_{111}=8$, $d_{112}=4$& $d_{111}=8$, $d_{112}=2$, $d_{113}=2$, $d_{123}=1$  &
$d_{112}=4$, $d_{113}=2$, $d_{122}=4$, $d_{123}=2$, $d_{124}=2$, $d_{134}=1$, $d_{224}=2$, $d_{234}=1$
\end{tabular}
\end{center}
\end{table}

Many different tables with combinations of statistical data can be
generated. I suffice here by giving Table \ref{tab:numerics}, which
lists the number configurations with $S(T_{(12)3},t)\neq 0$, the
number for which $U(T_{(12)3},t)\neq 0$, and the number of
configurations which lie in both classes. A C++ 
code has searched $10^9$ configurations per Calabi-Yau, using a
random number generator. The random number generator chose its values 
for the moduli and  the charges in the following domains:
$J^a\in [1,\dots ,12]$, $P^a\in [1,\dots ,10]$, $Q_a\in [-20,\dots
,20]$. The variation of the quantities in the table between different runs of $10^9$ configurations
is $<0.05 \%$. Clearly, the physical condition $S(T_{(12)3},t)\neq
0$ is less often satisfied than the condition $U(T_{(12)3},t)\neq 0$,
although it is not a subset of it. One can also read off from the table, that for all three Calabi-Yaus the ratio of the number of
charge combinations with $T_{(12)3}$ stable, but $T_{12}$ unstable in
terms of $t$ ($S(T_{12},t)=0$), is between 6 and $7\%$. It would be
interesting to better understand the 
dependence on Calabi-Yau, moduli or charges of these and other ratios,
and derive them analytically. 

\begin{table}[h]
\caption{Number of trees in a search of $10^9$ trees $T_{(12)3}$, for which
  $S(T_{(12)3},t)\neq 0$, $U(T_{(12)3},t)\neq 0$ and the number of trees which
  satisfy both conditions.} 
\label{tab:numerics}
\begin{center}
\begin{tabular}{p{1cm}|p{3cm}|p{3cm}|p{3cm}}
$b_2$ & $S(T_{(12)3},t)\neq 0$ & $U(T_{(12)3},t)\neq 0$ &
$S(T_{(12)3},t)\neq 0\quad$  $\bigcap U(T_{(12)3},t)\neq 0$ \\
\hline
$2$ & 18147241 & 29465018 & 17016426\\
$3$ & 22255909 & 35817183 & 20750877\\
$4$ & 23264713 & 37135142 & 21654091\\
\end{tabular}
\end{center}
\end{table}

%

\subsection{Non-primitive wall-crossing}
\label{subsec:non-primitive}

This last subsection discusses some aspects of non-primitive
wall-crossing. Ref. \cite{Denef:2007vg} presents a formula for
the jumps of the index, for semi-primitive wall-crossing
$\Gamma\to N\Gamma_1+\Gamma_2$, which is known to be
compatible with the KS-formula. For the application to D4-D2-D0
BPS-states in the large volume limit, where the walls are independent of $Q_{0(,i)}$, a
wall-crossing formula with an additional parameter for the D0-brane
charge is desired. This formula can be derived from
the KS-formula similar to Ref. \cite{Chuang:2009}. We take the constituent charges to be
$\Gamma_1=(N\gamma_1,Q_{0,1})$ and $\Gamma_2=(\gamma_2,Q_{0,2})$, with $\gamma_1=(P_1,Q_1)$ and
$\gamma_2=(P_2,Q_2)$ respectively. One finds for the
generating series of the indices 
\begin{eqnarray}
\label{eq:d0semip}
&&\sum_{N=0}^\infty\sum_{Q_0}\Delta\Omega((N\gamma_1+\gamma_2,Q_0);t)u^Nv^{Q_0}=\sum_{Q_{0,2}}^\infty\Omega((\gamma_2,Q_{0,2}))v^{Q_{0,2}} \\
&&\qquad
\times\prod_{k=1}^\infty\prod_{Q_{0,1}}\left(1-(-1)^{I_{12}k}u^kv^{Q_{0,1}}\right)^{I_{12}  \,k\,\Omega((k\gamma_1,Q_{0,1}))}.\non
\end{eqnarray}
The $\Delta\Omega(\Gamma;t)$ are the contributions to the index in a
stable chamber for $T_{12}$ with $I_{12}>0$. For $N=1$ one obtains our previous result
(\ref{eq:primwallcross}). One finds for $N=2$: 
\begin{eqnarray}
\label{eq:semiN2}
\Delta\Omega(\Gamma;t)&=&-\sum_{Q_{0,1}+Q_{0,2}=Q_0}2 I_{12}\,\Omega((2\gamma_1,Q_{0,1});t)\,\Omega((\gamma_2,Q_{0,2});t)\\
&&+\sum_{{Q_{0,1}+Q_{0,2}+Q_{0,3}=Q_0 \atop Q_{0,1}\neq
    Q_{0,3}}}I_{12}^2\,\Omega((\gamma_1,Q_{0,1});t)\,\Omega((\gamma_1,Q_{0,3});t)\,\Omega((\gamma_2,Q_{0,2});t)
\non\\
&&+\sum_{2Q_{0,1}+Q_{0,2}=Q_0}\half I_{12}\,\Omega((\gamma_1,Q_{0,1});t)\,\Omega((\gamma_2,Q_{0,2});t)\non
\\
&&\qquad\times\left(I_{12}\,\Omega((\gamma_1,Q_{0,1});t)- 1 \right)\non.
\end{eqnarray}
This expression raises a puzzle. The
discussion of Ref. \cite{Manschot:2009ia} (see the review on page
\pageref{subsubsec:twoend} and further), suggests that a prerequisite
for $S$-duality
invariance of the generating function of $\Delta \Omega(\Gamma;t)$, is
that it can be expressed in terms of products of vector-valued modular forms
of $SL(2,\mathbb{Z})$. However, the $``-1"$ in the last line makes that a factor  $h_{P_1,\mu_1}(2\tau)$
would appear in the current case, which is not a vector-valued modular form of $SL(2,\mathbb{Z})$ but of the congruence
subgroup $\Gamma_0(2)$. The resolution to this puzzle is that the
correct definition of $h_{P,\mu}(\tau)$ is {\it not} as generating function of $\Omega(\Gamma)$ but
instead of  $\bar \Omega(\Gamma)
=\sum_{m|\Gamma}\frac{1}{m^2}\Omega(\Gamma/m;t)$. Requiring that the
newly defined $h_{P,\mu}(\tau)$ transform as an $SL(2,\mathbb{Z})$
vector-valued modular form is compatible with semi-primitive
wall-crossing. To this end, redefine $h_{P,Q-\frac{1}{2}
  P}(\tau)$:   
\be
h_{P,Q-\frac{1}{2} P}(\tau)=\sum_{Q_0} \bar \Omega((P,Q,Q_0))\,q^{Q_{\bar
    0}+\frac{1}{2} Q^2}.
\ee
The generating function of $\Omega(\Gamma)$
transforms only under a congruence subgroup $\Gamma_0(M)$, with $M$ a product of primes $p$: $M=\prod_{p^{\alpha_p}|P}p^{\alpha_p}$, for total magnetic
charge $P$. For $N=2$, it is $h_{2P_1,2\mu_1}(\tau)-\frac{1}{4}h_{P_1,\mu_1}(2\tau)$
which has an expansion with integer coefficients, but does not transform well under $SL(2,\mathbb{Z})$.

Using this new definition, the contribution to the generating function
of \\$\sum_{Q_0}\Delta\Omega(\Gamma;t)\,q^{-Q_0+\frac{1}{2}Q^2}$ in a stable
chamber is:
\be
\sum_{2Q_1+Q_2=Q} \,q^{\frac{1}{2}Q^2-(Q_1)^2_1-\frac{1}{2}(Q_2)^2_2}\times \left(\half I_{12}^2\,h^2_{P_1,\mu_1}(\tau)h_{P_2,\mu_2}(\tau)-\,2I_{12}\,
h_{2P_1,2\mu_1}(\tau)h_{P_2,\mu_2}(\tau) \right). \non
\ee
The two terms can be identified as contributions of the trees $T_{(12)1}$
and $T_{(2\cdot 1\,2)}$.\footnote{The tree $T_{(2\cdot 1\,2)}$ has 
  two endpoints, one with magnetic charge $2P_1$ and one with $P_2$.}
$T_{(12)1}$ should be considered as a special (degenerate) case
of $T_{(12)3}$. We also observe that modularity of the complete partition function,
requires that the $T_{(12)1}$-contribution should combine with a mock Siegel
theta function of the lattice $\Lambda_1\oplus \Lambda_1\oplus
\Lambda_2$, whereas the $T_{(2\cdot 1\,2)}$-contribution should
combine with a mock Siegel theta function of $\Lambda_{2\cdot 1}\oplus
\Lambda_2$ (where $\Lambda_{2\cdot1}$ has quadratic form
$2d_{abc}P_1^c$). Therefore, to show the compatibility of semi-primitive
wall-crossing with modularity, one is forced to understand the extended
flow trees, which we studied before. If we insert the products 
$S(T_{(12)1},t)$ (which is $-\half$ or 0) and $S(T_{(2\cdot1\,2)},t)$, and add the contributions of $T_{(2\cdot1\,2)}$ with primitive charges, we find
\begin{eqnarray}
\label{eq:contftrees}
&&\sum_{Q_1+Q_2+Q_3=Q}
\,q^{\frac{1}{2}Q^2-\frac{1}{2}(Q_1)^2_1-\frac{1}{2}(Q_3)^2_1-\frac{1}{2}(Q_2)^2_2}\, S(T_{(12)1},t)
I_{(12)1}\,I_{12}\,h_{P_1,\mu_1}(\tau) h_{P_1,\mu_3}(\tau)h_{P_2,\mu_2}(\tau). \non \\
&&\qquad - \sum_{Q_1+Q_2=Q} \,q^{\frac{1}{2}Q^2-\frac{1}{2}
  (Q_1)^2_{2\cdot 1}-\frac{1}{2}(Q_2)^2_2}\, S(T_{(2\cdot 1 2)},t)\,2I_{12}\,
h_{2P_1,2\mu_1}(\tau)h_{P_2,\mu_2}(\tau).
\end{eqnarray}
The sum over $Q$ will give the correct mock Siegel theta functions (\ref{eq:Psi2})
and (\ref{eq:psi3}); the positivity condition of Subsection \ref{subsec:bpsmass} implies the 
convergence of the series. Note that for $P$ primitive, the semi-primitive
wall-crossing formula for $N=2$ is precisely such that modularity and 
integrality are compatible. This also suggests more generally, that the contribution to the
partition function from a rooted tree, based on a nested list of
magnetic charge, preserves $S$-duality. One will find products of vector-valued
modular forms corresponding to the different endpoints.

More evidence for the claim that $\bar \Omega(\Gamma)$ are the
correct invariants in the context of $S$-duality, can be found from the partition
functions of $\CN=4$ Yang-Mills on a surface \cite{Vafa:1994tf}, which are closely
related to D4-brane partition functions on a divisor of a
Calabi-Yau. These partition functions are
generating functions of the Euler number  $\chi(\CM)$ of the instanton moduli
space $\CM$,  which are related to the DT-invariants by 
$\Omega(\Gamma;t)=(-1)^{\dim \CM(\Gamma)}\chi(\CM(\Gamma))$ \cite{Diaconescu:2007bf}. 
 Yoshioka has calculated in Refs. \cite{Yoshioka:1994,
  Yoshioka:1995} the partition function for $U(2)$ Yang-Mills (rank 2 sheaves) on
$\mathbb{CP}^2$. The two partition functions for
sheaves of rank 2 with $c_1=0 \mod 2$ and $1\mod 2$ are given by  
\be
\label{eq:h20}
h_{2,0}(\tau)=-\frac{f_{2,0}(\tau)}{\eta(\tau)^6},
\qquad h_{2,1}(\tau)=\frac{f_{2,1}(\tau)}{\eta(\tau)^6},
\ee
where $f_{2,i}(\tau)$ are the generating functions of the class numbers $H(n)$:
\be
f_{2,0}(\tau)=\sum_{n=0}^\infty 3H(4n) q^{n},\qquad f_{2,1}(\tau)=\sum_{n=1}^\infty 3H(4n-1) q^{n-\frac{1}{4}}.
\ee
$(h_{2,0}(\tau),h_{2,1}(\tau))$ transforms as a vector-valued
modular form of weight $-\frac{3}{2}$.\footnote{The vector $(f_{2,0}(\tau),f_{2,1}(\tau)) $ is actually a
  mock modular form; a modular completion must be added for proper
  transformation properties under $SL(2,\mathbb{Z})$ \cite{Vafa:1994tf}.}  However, the coefficients of $h_{2,0}(\tau)$ are not
integers. To obtain integers, one needs to subtract the contribution of multiple
$U(1)$ instantons $\frac{1}{4}\frac{1}{\eta(2\tau)^3}$; the resulting
vector transforms only under $\Gamma_0(2)$. The
$-$-sign in (\ref{eq:h20}) is crucial and follows from the factor
$(-1)^{\dim \CM(\Gamma)}$. Similar results are known for $K3$
\cite{Vafa:1994tf, Minahan:1998vr}. 

Eq. (\ref{eq:contftrees}) suggests that the contribution of flow
trees to the index is most conveniently expressed in terms of $\bar
\Omega(\Gamma)$. This continues to be true for semi-primitive
wall-crossing with a larger multiplicity of $\Gamma_1$ and
non-primitive wall-crossing in general. Consider for example wall-crossing for
$(2\Gamma_1,2\Gamma_2)$. Eq. (\ref{eq:22}) expresses $\Delta
\bar \Omega(2\Gamma_1+2\Gamma_2;t)$ as a sum of terms indexed by nested lists which can be
attributed to different flow trees. It is not difficult to see that
this is a generic property of the jumps given by the
KS-formula. The non-trivial information provided by the KS-formula are
the prefactors of the contributions. Nested lists and flow trees are clearly useful tools
for enumerating invariants subject to wall-crossing.  

Of course, the integer invariants $\Omega(\Gamma)$ are useful too. 
For example, we have seen that the semi-primitive wall-crossing formula is a nice product formula in
terms of them. This has a geometric interpretation in terms
of halos ($N$ centers of $\Gamma_1$ placed on an equal distance around
a center with charge $\Gamma_2$), and correctly accounts for the
bose/fermi statistics \cite{Denef:2007vg}. 

One might wonder why $S$-duality and
integrality of the invariants are not compatible although they
are both well motivated from physics. A pragmatic reason is that
modularity seems to require that the jumps of the
indices can be written in terms of products of invariants, such that
the sum of the arguments of the invariants equals the total
charge. Such an identification is possible for  $\bar \Omega(\Gamma;t)$ but not for
$\Omega(\Gamma;t)$. 

Another physical motivation for the rational invariants are IIB
D-brane instantons. The IIA BPS-states can be
mapped to IIB instantons by a timelike T-duality, which
suggests that the instanton numbers are equal to the BPS-invariants
$\Omega(\Gamma;t)$. The invariants $\bar \Omega(\Gamma;t)$ appear for
instantons in their measure \cite{Pioline:2009ia}, the sum over
$m|\Gamma$ incorporates the contributions of multiple
instantons. This sum appears for D1-D(-1) instantons in fact after a Poisson resummation of a
manifestly $S$-duality invariant sum (analogous to Poincar\'e series)
\cite{RoblesLlana:2006is,RoblesLlana:2007ae}. The relation 
between $\Omega(\Gamma;t)$ and $\bar \Omega(\Gamma;t)$ is analogous to
Gromov-Witten invariants of $m$-fold covers of worldsheet instantons $\bar n_{Q,g}=\sum_m 
\frac{n_{Q/m,g}}{m^3}$, where $n_{Q,g}$ are also expected to be integers
\cite{Aspinwall:1991ce}. The rational invariants raise the question about the status of the MSW CFT for
non-primitive magnetic charges $P$. If this is a proper CFT, the
modular invariant partition function must have integer
coefficients. However, since the BPS-object is not protected by
conservation laws against decomposition into smaller objects, the
degrees of freedom might not combine to a proper conformal field theory.
  
\section{Summary and discussion}
\label{sec:conclusion}
The previous sections discussed the KS wall-crossing formula and flow trees,
and applied these to D4-D2-D0 black holes. Two new results which are
generally applicable to BPS wall-crossing using flow trees are: 
\begin{itemize}
\item[-] The sign of the flow parameter along every edge can be
  determined iteratively in terms of the initial moduli $t$, without
  explicit calculation of the flow throughout the tree. 

\item[-] It is demonstrated that $\Delta \bar \Omega(\Gamma;t)$ as
  derived from the KS-formula, can be decomposed into certain combinations of rational invariants $\bar
  \Omega(\Gamma,t)$ classified by nested lists, which also classify
  the flow trees. This suggests that the contribution to the
  index of a flow tree is conveniently expressed in terms of the rational
  invariants. 
\end{itemize}

The discussion on wall-crossing for D4-D2-D0 black holes is restricted to the large volume limit of  a single
K\"ahler cone. The following results are obtained:
\begin{itemize}
\item[-] For $N\leq 3$ is proven that the indefinite quadratic form
  $(Q-B)_+^2-\sum_{i=1}^N(Q_i-B)^2_i$ is positive definite for flow
  trees, since it is implied by the positivity of flow
  parameters in the tree. This result is expected to be true 
  for any $N$, which would imply that the BPS partition function in the mixed
  ensemble is convergent. 
\item[-] The contribution to the partition function of flow trees with
  3 endpoints is constructed, including the case where 2 endpoints have
  equal charge. The contribution of trees with non-primitive and primitive
  charges nicely combine to products of vector-valued
  modular forms, and mock Siegel theta functions.  
\item [-] The $S$-duality invariant partition function is a generating function of the
  rational invariants $\bar \Omega(\Gamma,t)$. It is conceivable that the contributions to
  the partition function of trees with prescribed magnetic charges preserve $S$-duality. 
\end{itemize}

Various aspects of wall-crossing for D4-D2-D0 BPS-states remain to be
better understood. A major aspect which was not addressed here, is the
modular completion of the mock Siegel theta function for $N=3$. This
prevented a confirmation of $S$-duality by the supergravity partition
function in this paper, although it is shown that important prerequisites are satisfied. The main
obstacles are 1) the signature of the indefinite quadratic form is 
$(2,3b_2-2)$, and 2) the complexity of the flow tree condition
$S(T_{(12)3},t)\neq 0$. The mathematical literature only reports on indefinite theta functions
and their modular completions for signature $(1,n-1)$. Another aspect
which deserves a better understanding is the physical interpretation
and derivation of the modular completion, it might
be related to perturbative contributions. Contributions to
the partition function of flow trees with $N>3$ are also left for
future research.  

This paper made various restrictions on the charges and the region of
moduli space; I hope to address in future research non-zero D6-brane
charge, to include finite volume effects and to cross walls between
K\"ahler cones. Another interesting direction is to
understand better the condition $S(T_{(12)3},t)\neq 0$ from a more mathematical
perspective, now it can be determined so easily in terms of $t$. An interesting
application in this context might be wall-crossing for sheaves on
surfaces as in \cite{Gottsche:1998}. 

\bigskip
\begin{center}{\bf Acknowledgements}\end{center}
It is my pleasure to thank Wu-yen Chuang, Atish Dabholkar, Emanuel
Diaconescu, Sheer El-Showk, Boris Pioline and Ashoke Sen for
discussion, and Frederik Denef for discussion and his comments on the manuscript.  This
work is supported in part by the ANR grant BLAN06-3-137168 and by the DOE grant
DE-FG02-96ER40949.

\providecommand{\href}[2]{#2}\begingroup\raggedright

\end{document}